\let\mathscr\mathbscr
\newtheorem{Theorem}{Theorem}
\newtheorem{Proposition}{Proposition}
\newtheorem{Example}{Example}
\newtheorem{Remark}{Remark}
\newtheorem{Definition}{Definition}
\begin{document}
%
\newgeometry{top=1in,bottom=0.75in,right=0.75in,left=0.75in}

\title{Fundamental Privacy Limits in Bipartite Networks under Active Attacks}

\author{

\IEEEauthorblockN{ Mahshad Shariatnasab$^\dagger$\thanks{This work was supported in part by NSF grant CCF-1815821, CNS-1619129, and  ND EPSCoR grant FAR0033968. This work was presented in part at the Annual Allerton Conference on Communication, Control, and Computing (Allerton) 2017, and the International Symposium on Information Theory (ISIT) 2018.}, Farhad Shirani$^\dagger$, Elza Erkip$^\ddagger$}
\\\IEEEauthorblockA{$^{ \dagger}$North Dakota State University, $^{^\ddagger}$New York University
\\Email: $\{$mahshad.shariatnasab, f.shiranichaharsoogh$\}$@ndsu.edu, 
elza@nyu.edu
}
}


%


\maketitle

 \begin{abstract}
This work considers active deanonymization of bipartite networks. The scenario arises naturally in evaluating privacy in various applications such as social networks, mobility networks, and medical databases. 
For instance, in active deanonymization of social networks, an anonymous victim is targeted by an attacker (e.g. the victim visits the attacker's website), and the attacker queries her group memberships (e.g. by querying the browser history) to deanonymize her.
In this work, the fundamental limits of privacy, in terms of  the minimum number of queries necessary for deanonymization, is investigated. A stochastic model is considered, where i) the bipartite network of  group memberships is generated randomly, ii) the attacker has partial prior knowledge of the group memberships, and iii) it receives noisy responses to its real-time queries. The bipartite network is generated based on linear and sublinear preferential attachment, and the stochastic block model. The victim's identity is chosen randomly based on a distribution modeling the users' risk of being the victim (e.g. probability of visiting the website). An attack algorithm is proposed which builds upon techniques from communication with feedback, and its performance, in terms of expected number of queries, is analyzed. Simulation results are provided to verify the theoretical derivations.


\end{abstract}


%
\IEEEpeerreviewmaketitle

\section{Introduction}
As tracking technologies --- both online and in the real-world ---
become more sophisticated and pervasive, there is a critical need to understand and quantify  the resulting privacy risk. For instance, on the
internet, users reasonably expect their online identities and web browsing activities to remain private. Unfortunately, this is far from the case in practice; in reality, users are constantly tracked on the internet. 
Often this is for benign, if somewhat disconcerting, reasons --- for instance, websites track users to serve them with targeted digital advertisements \cite{estrada2017online,razaghpanah2018apps}.
More disturbingly, web tracking can be used to stifle individuals'
free speech rights, or target vulnerable minority groups \cite{mavriki2017using}. 
Furthermore, in wireless applications, the location-based services
offered by
mobile devices, such as smart phones and autonomous vehicles, can cause significant privacy threats to users, since
the time series of locations can be statistically
matched to prior user behavior and lead to  identification
and tracking \cite{takbiri2017limits,montazeri2017achieving,takbiri2019asymptotic,de2013unique,blondel2015survey}.
As a result, there is an urgent need to understand and quantify users' privacy risk, that is, what is the likelihood that users on  can be uniquely identified using their \emph{fingerprints}? In this work, we study the fundamental limits of privacy in bipartite networks under active attacks. These networks arise naturally in modeling social network group memberships \cite{kruegel,fire2014online,su2017anonymizing}, medical databases \cite{domingo2016database}, and 
wireless mobility data \cite{montazeri2017achieving,takbiri2019asymptotic,de2013unique,blondel2015survey}, among others. 

The browser social network deanonymization attack developed 
by Wondracek et al.~\cite{kruegel} is a good representative of practical active bipartite network deanonymization (ABND)  attacks in the literature,
where the attacker runs a malicious website and seeks to  deanonymize users who visit the website (see Figure~\ref{fig:fingerprinting}).
To this end, the attacker first 
uses a web scraper to scrape the group memberships of users. This serves as the attacker's scanned bipartite graph,
$\mathcal{G}_s$, capturing the social network group memberships. 
Note that the scanned graph might be different from the ground-truth because of users privacy settings that act as a source of noise. When an unknown user (the victim) visits the attacker's website, the attacker queries social network group memberships to find the victim's identity. This is done by using browser history sniffing \cite{gulmezoglu2017perfweb, smith2018browser,shusterman2019robust} to ask questions of the form ``is the webpage of social network group `$r_{j}$' in the victim's browser history?" If yes, the attacker assumes that the victim is a member of the social network group $r_{j}$, and if no then the attacker assumes the victim is not a member of $r_{j}$. Of course, a user might be a member of a group they have not visited, or conversely, might not be a member of a group they have visited; consequently, the attacker's measurement is noisy. The attacker repeats this query for all social network groups in a pre-determined set to obtain the unknown victim's partial fingerprint. 
By matching the partial fingerprint of query responses to
the scanned fingerprints in the scanned graph the victim is deanonymized. 
In \cite{kruegel}, 
this simple deanonymization strategy is evaluated by using it to find the identities of the users in the Xing social network. It is shown that over 42\% of the users who are members of at least one group on Xing (more than 5.7 million users) can be deanonymized successfully using the algorithm.
Although effective, 
Wondracek et al.'s attack does not answer 
fundamental questions about the {optimal} number and type of group memberships to query, and the order in which to issue queries.
Other fingerprinting attacks proposed in literature~\cite{ finger1,finger2,finger3,finger4,finger5} have also adopted similar 
ad-hoc approaches without theoretical guarantees or analyses.

A user's fingerprint is the set of group memberships that reflect the user's activities and habits, e.g. websites the user has visited and social network groups that a user is a member of ~\cite{attribute1,attribute2}, characteristics of the user's web browser (e.g. font size) ~\cite{browser1}, and physical device features ~\cite{hardware1}. Fingerprinting based deanonymization attacks build on the empirical observation that, for a large enough set of group memberships, a user's fingerprints are unique.
The challenge, from an attacker's standpoint, is that the victim's fingerprints may not be accurately or easily available; i.e., fingerprints may be noisy and the attacker may have to actively query the victim's group memberships, one group at a time, to measure their fingerprint. However, an attacker may only be able to issue a limited number of queries to the victim's device. Our objective is to provide a rigorous mathematical formulation along with theoretical privacy guarantees for the ABND scenario.

\begin{figure}
\begin{center}
\includegraphics[width=0.6\textwidth]{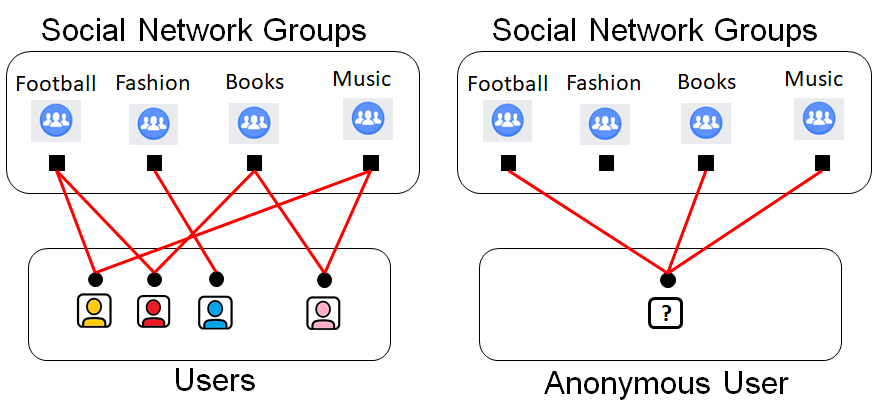}
\caption{(Left) An example of a group membership bipartite graph. (Right) An anonymous user (victim) is to be deanonymized based on partial fingerprints.}
\label{fig:fingerprinting}
\end{center} 
\end{figure} 

In ~\cite{shirani2017information}, we proposed a mathematical formulation for the ABND problem and introduced a typicality-based strategy by making analogies to the problem of  channel coding in information theory, and quantified the amount of information the attacker obtains from each query. We showed that under the assumption that users are equally likely to visit the attacker's website, the total number of queries required for deanonymization grows logarithmically in the number of users. Furthermore, the coefficient of the logarithm is inversely proportional to the mutual information between the random variables corresponding to the scanned graph elements and query responses. In \cite{shirani2018optimal}, we considered a general distribution, as opposed the a uniform one,  on the victim's index among the social network users.
This is based on the intuition that more active users would be more likely to visit an attacker's website, resulting in a non-uniform distribution on the victim index. We used techniques from communication over channels with feedback with non-uniform message sets, to propose attack strategies and derive theoretical performance guarantees.

In ~\cite{shirani2017information,shirani2018optimal}, we  considered random bipartite network models in which the edges are independent and identically distributed. However, many bipartite networks of interest, such as social networks \cite{capocci2006preferential,newman2001clustering}, networks in cell biology \cite{albert2005scale}, mobility networks \cite{borrel2005preferential}, and collaboration networks \cite{van2009random,bloznelis2015random} resemble graphs which are generated based on a growing model that grows in accordance to the preferential attachment (PA) rule, first proposed by Simon \cite{simon1957models} and rediscovered by Barb\'{a}si and Albert \cite{barabasi1999emergence}. In this model, edges are added to the graph iteratively, where at each step, a set of edges are added to the graph randomly such that vertices which have a higher degree are more likely to attract more new connections. In addition to the PA model, another random bipartite graph generation model of interest is the stochastic block (SB) model, where groups are divided into communities, and community memberships of groups affects their likelihood of attracting new users \cite{florescu2016spectral,yen2020community}.
In this work, we propose a general formulation for the ABND problem,  where the bipartite graph random generation model encompasses the PA and SB models, and the scan and query noise models capture the users' different privacy settings and device specifications. We further propose several
information-threshold-based deanonymization strategies which build upon the channel coding and hypothesis testing methods studied in ~\cite{burnashev1976data,naghshvar2013active} to devise deanonymization attacks, and analyze their performance in terms of expected number of queries for successful deanonymization. Our main contributions are summarized below:

{\begin{itemize}[leftmargin=*]
\item{We build upon the ideas  in \cite{shirani2017information,shirani2018optimal} to develop a general mathematical formulation of the ABND problem which encompasses the network generation models such as PA and SB models, and allows for scan and query noises with general distributions. These distributions capture the users' various privacy preferences and device specifications.}
\item{We study the degree distribution and statistical properties of the graph under the proposed generation model. We prove that under certain sparsity conditions on the graph edges, the correlation among the user fingerprints is `weak' and the fingerprint vector's distribution is well-approximated by a product distribution. These derivations may be of independent interest in the study of bipartite networks.}
\item{We propose information-threshold-based attack strategies and derive theoretical guarantees for theirs success. Roughly speaking, in the proposed strategies, the attacker queries the selected victim's group memberships sequentially and calculates the amount of information obtained, i.e. the amount of uncertainty regarding each user index based on previous query responses. The attack ends when the uncertainty is lower than a given threshold for one of the user indices. The strategy reduces to the one in \cite{shirani2018optimal} if the graph edges are assumed to be independent and equally probable, which was proved to be optimal  in terms of expected number of queries necessary for successful deanonymization for asymptotically large networks.  }
\item{We simulate the performance of the proposed strategies both for synthesized as well as real-world networks, and compare the results with our analytical derivations.
}
\end{itemize}}

The rest of the paper is organized as follows:  Section \ref{sec:not} describes the notation. In Section \ref{sec:form}, we provide the problem formulation. In Section \ref{sec:mem}, we study the degree distribution and other statistical properties of the graph. In Section \ref{sec:ITS}, we propose the attack strategy and derive theoretical guarantees for its success. In Section \ref{sec:simul}, we provide simulation results to verify the theoretical derivations. Section \ref{sec:conc},  concludes the paper.

\section{Notation}\label{sec:not}
 We represent random variables by capital letters such as $X, U$ and their realizations by small letters such as $x, u$. Sets are denoted by calligraphic letters such as $\mathcal{X}, \mathcal{U}$. The set of natural numbers, and the real numbers are represented by $\mathbb{N}$, and $\mathbb{R}$ respectively. The random variable $\mathbbm{1}_{\mathcal{E}}$ is the indicator function of the event $\mathcal{E}$.
 The set of numbers $\{n,n+1,\cdots, m\}, n,m\in \mathbb{N}$ is represented by $[n,m]$. Furthermore, for the interval $[1,m]$, we sometimes use the shorthand notation $[m]$ for brevity. 
 For a given $n\in \mathbb{N}$, the $n$-length vector $(x_1,x_2,\hdots, x_n)$ is written as $x^n$. 

\section{Problem Formulation}
\label{sec:form}

In this section, we describe our mathematical formulation of the ABND scenario,  which generalizes the formulation provided in ~\cite{shirani2017information,shirani2018optimal}, and encompasses the statistical models for bipartite networks proposed in \cite{kunegis2013preferential,peruani2007emergence,peltomaki2006correlations}. To facilitate explanation, and provide justifications for the model assumptions, we describe the model by focusing on the scenario of deanonymizing social network users using the bipartite network of their group memberships. 
An ABND attack unfolds in two phases, a passive phase, and an active phase   \cite{wondracek2010practical, olejnik2012johnny,su2017anonymizing}. In the passive phase, the attacker acquires a noisy observation of the bipartite network of group memberships by scanning the whole social network. In the active phase, the attacker targets a specific victim (e.g. a user visiting the attacker's website), and uses browser history sniffing techniques to query the victim's group memberships. The attacker constructs a \emph{fingerprint} for the victim using the (noisy) query responses, and identifies the victim by comparing this fingerprint with the noisy scan of the bipartite graph acquired in the passive phase of the attack. 
As shown in Figure \ref{fig:overview}, the model consists of three components which are described in detail in the following sections: i) the \emph{ground-truth} $\mathcal{G}_0$ representing the `true' group memberships of users in the social network
(Section \ref{sec:ground_truth}), ii) the \emph{scanned graph} $\mathcal{G}_s$ which represents the attacker's prior knowledge of the ground-truth
(Section \ref{sec:scanned_graph}), and iii) the \emph{query responses}, represented by $\mathcal{G}_q$, which are acquired by the attacker by querying the victim in the active phase of the attack (Section \ref{sec:query_responses}). The objective is to design an attack strategy which determines the sequence of queries made by the attacker to deanonymize the victim, along with theoretical guarantees for its success (Section \ref{sec:strategy}).

\begin{figure}[!t]
\begin{center}
\includegraphics[width=0.7\columnwidth]{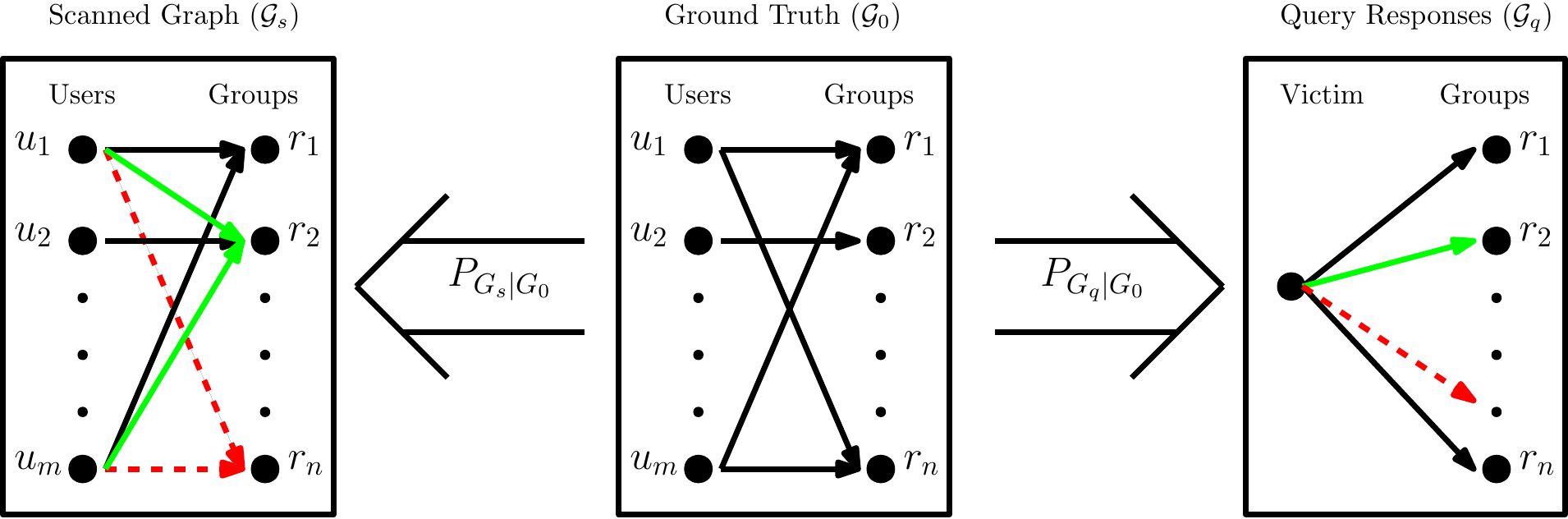}
\caption{Components of the ABND problem: i) the ground-truth characterized by the bipartite graph $\mathcal{G}_0$ and generated based on $P_{\mathcal{G}_0}$, ii) the scanned graph $\mathcal{G}_s$ generated based on $P_{\mathcal{G}_s|\mathcal{G}_0}$, and iii) the query responses $\mathcal{G}_q$ generated based on $P_{\mathcal{G}_q|\mathcal{G}_0}$. The black edges represent `true' group memberships, whereas green and dashed-red edges show additions and omissions, respectively, which may manifest due to noise in scanning the social network in passive phase of the attack, and noisy query responses in the active phase.}
\label{fig:overview}
\end{center}
\end{figure}

\subsection{The ground-truth}
\label{sec:ground_truth}
The collective set of group memberships in the social network are called the {\em ground-truth}. The ground-truth is represented by a bipartite graph. 

\begin{Definition}[\textbf{Bipartite Graph}]
A bipartite graph $\mathcal{G}=(\mathcal{V}_1,\mathcal{V}_2,{\mathcal{E}})$, is a graph with vertex set $\mathcal{V}_1\bigcup \mathcal{V}_2$ and edge set ${\mathcal{E}}\subseteq \{(v_i,v_j)|v_i\in \mathcal{V}_1, v_j\in \mathcal{V}_2\}$, where $\mathcal{V}_1\cap \mathcal{V}_2=\phi$. 
\end{Definition}

We consider a social network with user set $\mathcal{U}\triangleq \{u_1,$ $u_2,\cdots, u_m\}, m\in \mathbb{N}$, and group set $\mathcal{R}\triangleq \{r_1,$ $r_2,\cdots, r_n\}, n\in \mathbb{N}$. The ground-truth is characterized by a bipartite graph $\mathcal{G}_0= (\mathcal{U}, \mathcal{R}, \mathcal{E})$, where ($\mathcal{U}$,$\mathcal{R}$) partitions the vertex set, and the edge set $\mathcal{E}$  consists of all pairs $(u_k,r_j), k\in [m], j\in [n]$ for which user $u_k$ is a \textit{member} of the group $r_j$. 

\begin{Definition}[\textbf{Group Size}]
Let the set of users which are \textit{members} of the $j$th group $r_j, j\in [n]$ be denoted by $\mathcal{U}_j \triangleq \{u_{k_1}, u_{k_2}, \cdots , u_{k_{D_j}}\}, k_1,k_2,\cdots,k_{D_j}\in [m]$. Then, $D_j\triangleq |\mathcal{U}_j|$ is called the size of group $r_j$. 
\end{Definition}

\begin{Example}
 In the Facebook social network, $\mathcal{U}$ is the set of users and $\mathcal{R}$ includes the pages/ events/ groups/  applications on Facebook. Here, the groups under consideration are those whose member lists are publicly available. 
\end{Example}

Each user is assigned a fingerprint based on its group memberships. The fingerprint is a binary vector of indicator functions, indicating the membership of the user in each particular group. Alternatively, the user's fingerprint is the vector of indicator functions corresponding to the edges between the user and each of the groups. 

\begin{Definition}[\textbf{Fingerprint}]
\label{def:finger}
Consider the ground-truth bipartite graph  $\mathcal{G}_0=(\mathcal{U},\mathcal{R},{\mathcal{E}})$:
\begin{itemize}
\item{ For a user $u_k, k\in [m]$, the set $\mathcal{R}_k\triangleq \{r_j|(u_k,r_j)\in \mathcal{E}\}, k\in [m]$ is called the set of groups associated with $u_k$.}
\item{ The fingerprint of user $u_k, k\in [m]$ is the vector $({R}_{k,j})_{j\in [n]}\triangleq (R_{k,1},R_{k,2},\cdots,R_{k,n})$, where
\begin{align*}
R_{k,j}\triangleq
\begin{cases}
 1 \qquad &\text{if } r_j\in \mathcal{R}_k\\
0 &\text{otherwise}
\end{cases}, \qquad k\in [m], j \in [n].
\end{align*}}
\item{ The vector $R_{k,\mathcal{I}}\triangleq(R_{k,j})_{j\in \mathcal{I}}$ is called a partial fingerprint of $u_k, k\in [m]$, where $\mathcal{I}\subseteq [n]$.
}
\end{itemize}
\end{Definition}



We consider a stochastic model which is a generalization of those considered in prior works on active social network deanonymization \cite{shirani2017information, shirani2018optimal,su2017anonymizing}, and includes as a special case several statistical models such as  SB model, and PA model which have been used for bipartite networks such as social network group memberships, collaboration networks, authorship networks, and location networks \cite{kunegis2013preferential,peruani2007emergence,peltomaki2006correlations}. 

The ground-truth $\mathcal{G}_0$ is generated iteratively based on a `growing network' model as follows. Fix $\mu\in \mathbb{N}$, and define  $\Delta\triangleq\mu n$, where $n$ is the number of social network groups. The iterative process is initiated by considering a bipartite graph $(\mathcal{U}, \mathcal{R},\phi)$, which has no edges connecting its two sets of vertices. The ground-truth graph is generated in $\Delta$ iterative steps, where at each step a single edge is added to the graph, so that $|\mathcal{E}|=\Delta$ after the last iteration. As a result, the average group size is equal to $\frac{\Delta}{n}=\mu$.
For $t\in [\Delta]$, define  $\mathcal{G}_0(t)\triangleq (\mathcal{U},\mathcal{R}, \mathcal{E}(t))$ as the bipartite graph at step $t$. The group membership sets at step $t\in \Delta$ are denoted by $\mathcal{U}_j(t), j\in [n]$, and the group sizes are denoted by $D_{t,j}\triangleq |\mathcal{U}_j(t)|$. 
Building upon the idea of PA graph generation models --- where the likelihood that a given vertex connects to a new vertex is linearly related with the degree of that vertex ---
we assume that, at each step, groups attract new members in accordance with their \textit{popularity} at that step. To elaborate, we assume that each group $r_j, j\in [n]$ is assigned a {popularity value} $\tau_j(t)$ which captures its popularity at time $t$. The value of $\tau_j(t)$, which may depend on the size of group $r_j$ among other factors, affects the probability of $r_j$ attracting new members as described in the sequel.  In this work, we restrict to to the case where the value of $\tau_j(t), j\in [n],t\in [\Delta]$ depends only on the group size $D_{t,j}$ and an initial value $\tau_j(0)$.  The vector $\underline{\tau}(t)= (\tau_1(t),\tau_2(t),\cdots, \tau_r(t))$ represents the vector of group popularity values at time t. 
\\\textbf{Initiation:} Each group $r_j, j\in [n]$ is assigned an initial popularity value $\tau_j(0)>0$. The ground-truth graph is initiated as $\mathcal{G}_0(0)\triangleq (\mathcal{U},\mathcal{R}, \phi)$. So, the group membership sets are  $\mathcal{U}_j(0)= \phi, j\in [n]$ and $D_{0,j}=0, j\in [n]$. 
\\\textbf{Step t:}  At each step $t\in [\Delta]$, a group $r_{J_t}$ and a user $u_{K_t}$ are chosen as described next, and the corresponding edge $(u_{K_t}, r_{J_t})$ is added to the bipartite graph, i.e. $\mathcal{E}(t)=$ $\mathcal{E}(t-1)\cup \{(u_{K_t}, r_{J_t})\}$.  First, a group $r_{J_t}$ is chosen among the set of all groups $\mathcal{R}$ according to the probability distribution $\mathbf{P}(t)=(P_1(t),P_2(t),\cdots,$ $P_n(t))$ defined below:
\begin{align*}
    P_j(t)\triangleq\frac{\tau_{j}(t-1)}{\sum_{j'=1}^n \tau_{j'}(t-1)},
\end{align*}
 Next, a user $u_{K_t}$ is chosen randomly and uniformly from the set of users which are not members of $r_{J_t}$, i.e. $[m]- \mathcal{U}_{J_t}(t-1)$.  The edge $(u_{K_t},r_{J_t})$ is added to the edge set. 
The group popularity values are updated as follows:
\begin{align}
    \tau_j(t)= 
    \begin{cases}
    \tau_{j}(t-1)\qquad & \text{ if } j\neq J_t
   \\f(\tau_{j}(t-1),\tau_{j}(0))& \text{ if } j= J_t
    \end{cases}, j\in [n]
    \label{eq:pop}
\end{align}
where $f:\mathbb{R}\times \mathbb{R}\to \mathbb{R}$ is a strictly increasing function which captures the increase in a group's popularity due to the addition of a new member and its subsequent effect on the group's attractiveness to new members. For tractability, we assume that $f(\cdot,\cdot)$ is the same for all groups and fixed over time. If $f(x,y), x,y \in \mathbb{R}$ is a linear function of $x$ for any fixed $y$, then we recover the PA model in \cite{simon1957models,barabasi1999emergence}. On the other hand, if $f(x,y)$ is concave in $x$ for any fixed  $y$, then an increase in the popularity of an unpopular group increases its attractiveness to new users more significantly than a similar increase in the popularity of an already popular group. On the other hand, a convex $f(\cdot)$ creates the opposite effect.


\begin{Remark}
 We have assumed that at each step, there exists a user which is not already a member of $r_{J_t}$. We will show that due to the sparsity conditions considered in this work, the probability that there exists a group for which every user is its member, vanishes exponentially in the number of users as the graph becomes larger (Proposition \ref{prop:2}). However, for completeness, 
we assume that if every user is already a member of $r_{J_t}$ (i.e. if $\mathcal{U}_{J_t}(t-1)=[m]$), then an edge is not added in this step, the group popularities are updated as usual, and the generation process advances to the next step.
\end{Remark}
\begin{Remark}
We study bipartite graphs where the edges are binary-valued, i.e. a single edge between a given user and a given group is either present or absent. A natural extension is to consider edges with non-binary attributes and multigraphs. The attribute captures the nature of a users' group membership, e.g. group administrator, active member, etc. Inclusion of such information in the network graph may assist the attacker in deanonymizing the victim.
The information theoretic derivations provided in the next sections can be extended in a straightforward manner to graphs with attributed edges and multigraphs, where attributes are taken from an arbitrary finite set, and a finite number of edges is allowed between each two vertices, respectively. 
\end{Remark}



In this work, we focus on the particular choice of $f(x,y)=$ $ ((x-y)^{\frac{1}{\alpha}}+1)^\alpha+y, \alpha\in (0,1]$. This choice recovers several models for bipartite networks studied in prior works ---- such as equiprobable edges model, SB model, and linear and sublinear PA model --- by taking different values of $\alpha$ as described next. The parameter $\alpha$ is an intrinsic network parameter. In this case, Equation \eqref{eq:pop} can be rewritten as:
\begin{align*}
    \tau_j(t)=
    \begin{cases}
     \tau_j(t-1)\qquad &\text{ if } j\neq J_t
    \\D^\alpha_{t-1,j}+\tau_j(0) &\text{ if } j=J_t, D_{t-1,j}<n\\
   \tau^{\alpha}_j(t-1)+\tau_j(0) & \text{otherwise}
           \end{cases},
\end{align*}
where             $j\in [n]$, and $t\in [\Delta]$.  At a high level, $\alpha$ determines the effect of  the groups' sizes on the membership choices of new users, where larger $\alpha$ means that  the group-size plays a significant role in attracting new users, with large groups being more attractive, and at the other end of the spectrum, if $\alpha \to 0$, then the group popularities are constant through the generation process regardless of the group sizes. We focus on $\alpha\leq 1$ which leads to linear or sublinear PA and has been shown to be a suitable model for various networks of interest \cite{capocci2006preferential,newman2001clustering,albert2005scale,borrel2005preferential,van2009random,bloznelis2015random}.

\begin{Definition}[\textbf{Ground-truth Parameters}]
\label{Def:Models}
The ground-truth statistics are parametrized by $(n,m,\alpha, \Delta, $ $(\tau_j(0))_{j\in [n]})$. The following scenarios are considered in this work:
\\\textbf{$\alpha$-Preferential Attachment ($\alpha$-PA):} This is a generalization of the PA model, where $f(x)=((x-y)^{\frac{1}{\alpha}}+1)^{\alpha}+y, \alpha\in (0,1]$ and initial popularities are $\tau_j(0)=\tau_{j'}(0)=1, j,j'\in [n]$.
\\\textbf{Stochastic Blocks (SB):}  We take $\alpha \to 0$ and $\tau_j(0)\in \mathcal{T}$, where $ \mathcal{T}$ is a finite set. The collection of subsets $\mathcal{C}_\tau= \{r_j: \tau_j(0)= \tau\}, \tau\in \mathcal{T}$ are called the communities of social network groups.
\end{Definition}

\begin{Remark}
As a special case of the SB model, let us take $\alpha \to 0$ and $\tau_j(0)=\tau_{j'}(0), j,j'\in [n]$. Then, $f(x,y)=x$ for all $x,y\in \mathbb{R}$, and $\tau_j(t)=\tau_{j'}(t), j,j' \in [n], t\in [\Delta]$.
We call this the Independent and Equiprobable Edges (IEE) scenario. This is analogous to the Erd\"os-R\'enyi model for non-bipartite graphs \cite{erodos1959random}, and was studied in \cite{shirani2018optimal}. In this case, $P_j(t)= P_{j'}(t), j\in [n], t\in [\Delta]$, and  the groups are equally likely to attract new users regardless of their current number of members.
\end{Remark}
 
\begin{Remark}
In the SB scenario, we have $P_j(t)= \frac{\tau}{\sum_{\tau'\in \mathcal{T}}
\tau'|\mathcal{C}_{\tau'}|}, r_{j}\in \mathcal{C}_\tau, t\in [\Delta], \tau \in \mathcal{T}$. So,  the groups which belong to the same community $\mathcal{C}_\tau, \tau\in \mathcal{T}$ are equally likely to attract new users regardless of their current number of members. Groups may be classified into different communities based on the shared interests of their users, e.g. age group, profession, etc.
This model resembles the stochastic block model for social network friendship graphs \cite{florescu2016spectral,yen2020community}. 
\end{Remark}

\begin{Remark}
In the $\alpha$-PA scenario,
if $\alpha=1$ and $\tau_j(0)=\tau_{j'}(0)=1, j,j'\in [n]$, we have $f(x,y)=x+1, x\in \mathbb{R}$ and the model becomes the well-studied (linear) PA model.
In this case, $P_j(t)= \frac{D_{t-1,j}+1}{t+n-1}, j\in [n], t\in [\Delta]$, and the group sizes follow a power-law. This is in agreement with empirical studies of social network group memberships (e.g. \cite{capocci2006preferential,newman2001clustering}), where such power-law behavior has been observed.
\end{Remark}

\begin{Remark}
 In practice, the ground-truth statistics parametrized by $(n,m, \Delta, (\tau_j(0))_{j\in [n]}, \alpha)$ are not available to the attacker. Rather, the attacker acquires an estimate of these parameters as in \cite{kunegis2013preferential} based on prior observations of the bipartite network.  \end{Remark}

\subsection{The Scanned Graph}
\label{sec:scanned_graph}
As described in previous sections, the first phase of the fingerprinting attack is the passive phase, in which the attacker scans the social network 
for publicly available information regarding the users' group memberships. The attacker's observation of the ground-truth, acquired through this scanning process, is represented by the bipartite graph $\mathcal{G}_s= (\mathcal{U},\mathcal{R},\mathcal{E}_s)$, which is a partial and noisy observation of the users' group memberships. One reason for the noise in the scanned graph is that  some users may have made a subset of their group memberships hidden which results in edge omissions in the scanned graph. 
We model the resulting noise stochastically by assuming 
that the set of edges $\mathcal{E}_s$ in the scanned graph is generated randomly, conditioned on the set of edges $\mathcal{E}_0$ in the ground-truth graph. As discussed above, the difference between $\mathcal{E}_0$ and $\mathcal{E}_s$ is due to the privacy preferences of a specific user. As a result, we assume that the noise statistics in scanning a specific user-group edge $(u_k, r_j)$ is dependent on the corresponding user preference which is captured by the parameter $\gamma(k)\in \Gamma$, where $\Gamma$ is a finite set. This is formalized below.

\begin{Definition}[\textbf{Scanned Graph Statistics}]
 Let $P^{\gamma(k)}_{E_s|E_0}(\cdot|\cdot), \gamma(k) \in \Gamma, k\in [m]$ be a collection of conditional probability distributions, where $E_s$ and $E_0$ take binary values, and $\Gamma$ is a finite set. Let $R_{k,j}\triangleq \mathbbm{1}((u_k, r_j)\in \mathcal{E}_0)$ and $F_{k,j}\triangleq  \mathbbm{1}((u_k, r_j)\in \mathcal{E}_s), k\in [m], j\in [n]$. Then, 
 \begin{align*}
     P(\mathcal{E}_s|\mathcal{E}_0)=
     \prod_{k\in [m], j\in [n]} P^{\gamma(k)}_{E_S|E_0}(  F_{k,j}|R_{k,j}).
 \end{align*}
 In particular, the following Markov chains are assumed:
 \begin{align*}
     F_{k,j}\leftrightarrow R_{k,j},k \leftrightarrow (F_{k',j'},R_{k',j'})_{(k',j')\neq (k,j)}, k\in [m], j\in [n].
 \end{align*}
\end{Definition}

\begin{Example}[\textbf{Erasure Model for $\mathcal{G}_s$}]
Assume that the attacker scans a social network to acquire the scanned graph. The attacker observes a subset of the true group memberships of users \cite{kruegel} since some users choose to keep their membership in certain groups private. As a result, the scanned graph $\mathcal{G}_s$ consists of a sampled subset of the edges in the ground-truth $\mathcal{G}_0$. For simplicity, let us assume that the membership of user $u_k$ in group $r_j$ is publicly available  with probability $1-s_{k}, k\in [m], j\in [n]$, where $s_{k}\in [0,1]$. Then,  
\begin{align*}
&Pr(\mathcal{E}_s|\mathcal{E}_0)=\mathbbm{1}(\mathcal{E}_s\subset {\mathcal{E}_0})
\times
\\&
\prod_{k\in [m]}s_k^{|\mathcal{R}'_k|}(1-s_k)^{|\mathcal{R}_k|-|\mathcal{R}'_k|},
\end{align*}
where $\mathcal{R}_k$ and $\mathcal{R}'_{k}$ are the groups in which $u_k, k\in [n]$ is a member of in $\mathcal{G}_0$ and $\mathcal{G}_s$, respectively.
\end{Example}

\begin{Remark}
We assume that the attacker does not have knowledge of the users' privacy preferences, i.e it does not know the value of $\gamma(k), k\in [m]$ in $\Gamma$. The attacker only has access to the statistics $P^{\gamma}_{E_s|E_0}, \gamma\in \Gamma$. 
\end{Remark}


\subsection{Query Responses}
\label{sec:query_responses}
In the active phase of the attack, the attacker targets a victim, and actively queries its group memberships.  For instance, the victim visits a malicious website, and the attacker uses browser history sniffing techniques to query the victim's group memberships. The attacker may query the victim's group memberships sequentially by sending a single query regarding the victim's membership in a group at each step of the active attack, receiving a response, and deciding on the next query \cite{wondracek2010practical}. Alternatively, it may query a batch of group memberships simultaneously \cite{gulmezoglu2017perfweb, smith2018browser,shusterman2019robust}.
In this work, we focus on the first scenario, where the queries are made sequentially, one after the other. However, the analysis can be extended to the second scenario, where queries are made in batches, in a straightforward manner. 

The objective is to deanonymize the victim based on their group membership fingerprint.
We model the victim stochastically by assuming that it is chosen randomly from the user set. In general, the users are not equally likely to be a victim of an attack, For instance, users are not equally likely to visit a malicious website, risk-averse users are less likely to be the victim of a fingerprinting attack compared to risk-taker users. 
As a result, we assume that the victim $u_M$ is chosen from $\mathcal{U}$ based on an underlying distribution $P_M$.

\begin{Remark}
 In this work, following the conventional approach in privacy and security literature, we investigate a  `genie-aided' attacker by assuming access to $P_M$ in order to derive theoretical guarantees for users' privacy. However, it should be noted that, in practice, the attacker may only have an estimate $\hat{P}_M$ of $P_M$ or it may not have any prior knowledge of these statistics at all. In such cases, the attack strategies investigated in the following sections may be extended naturally, and their probability of success can evaluated with respect to a `worst-case' distribution $\hat{P}_M$.
\end{Remark}

Let us assume that the attacker queries the group memberships of the victim $u_M$ in the sequence of groups $(r_{j_1},r_{j_2},\cdots,r_{j_\ell}), j\in [n]$ in $\ell\in \mathbb{N}$ queries, and receives the binary vector of query responses $Y_1,Y_2,\cdots,Y_\ell$, where $Y_i=1$ indicates a positive response and $Y_i=0$ a negative response.
Generally, query responses are noisy since browser history sniffing techniques are imperfect and only provide noisy observations of the victim's browsing history. That is, $Y^\ell$ is a noisy version of the true group membership indicators $(R_{j_1},R_{j_2},\cdots, R_{j_\ell})$.
The noise statistics are determined by the users' software (e.g. browser \cite{smith2018browser}) and hardware specifications (e.g. CPU and memory specifications  \cite{gulmezoglu2017perfweb}) ,  and depend on the type of history sniffing attack.  However, these statistics do not depend on the specific website or group whose membership is being queried. This dependency is captured by the parameter $\theta(M)$, where $\theta:[m]\to \Theta$, and $\Theta$ is a finite set. The following definition formalizes the stochastic model for the query responses.  

\begin{Definition}[\textbf{Noisy Query Responses}]
\label{def:QR}
Let $\ell\in \mathbb{N}$ and let $P^{\theta}_{Y|R}, \theta\in \Theta$ be a collection of probability distributions, where $Y$ and $R$ are binary variables and $\Theta$ is a finite set. For the sequence $j_1,j_2,\cdots,j_{\ell}\in [n]$, assume that victim's fingerprint is $(R_{j_1},R_{j_2},\cdots, R_{j_\ell})$ and the received query responses are $Y_1,Y_2,\cdots,Y_\ell$. Then, 
\begin{align*}
P(Y^\ell=y^\ell|   (R_{j_i})_{i\in [\ell]}=r^\ell) =\prod_{i=1}^\ell P^{\theta(M)}_{Y|R}(y_i|r_i), y^{\ell}, r^{\ell}\in \{0,1\}^{\ell},
\end{align*}
where the parameter $\theta(M)$ takes values from $\Theta$ and its value depends on the victim's index $M$.
\end{Definition}
\begin{Remark}
In practice, the attacker does not have access to the statistics $P^{\theta(k)}_{Y|R}(y_i|r_i), k\in [m]$. Rather, it may query the victim's software and hardware specifications to acquire $\theta(k)$, and then estimate the noise statistics
based on prior observations of the querying process with these specifications and based on the history sniffing technique used by the attacker. This is in contrast with the noise model in the scanned graph $P^{\gamma(k)}_{Y|R}(y_i|r_i), k\in [m]$, where the attacker has no means of learning the user's privacy preferences $\gamma(k)$. 
\end{Remark}

To summarize, an active bipartite network deanonymization setup is characterized as follows.

\begin{Definition}[\textbf{Active Bipartite Network Deanonymization}]
 An active bipartite network deanonymization setup is characterized by parameters $(n,m, \Delta, \Theta, \Gamma,  \alpha,$ $ (\tau_j(0))_{j\in [n]}, P_M, (P^{\gamma(k)}_{E_S|E_0})_{k\in [m], \gamma \in \Gamma}, (P^{\theta(k)}_{Y|R})_{k\in [m], \theta\in \Theta})$, where $n$ is the number of groups, $m$ the number of users, $P_M$ determines the victim's ($u_M$) distribution among the users $\mathcal{U}$,  $P^{\theta(k)}_{Y|R}, \theta(k) \in \Theta$ is the query response noise statistics for user $k\in [m]$, $P^{\gamma(k)}_{E_S|E_0}, \gamma(k)\in \Gamma$ is the scanned graph noise statistics for user $k\in [m]$, $\alpha$ is the network growth parameter, 
 $\Delta$ is the total number of edges, and $(\tau_j(0))_{j\in [n]}$ are the initial group popularities. 
\end{Definition}

\subsection{Attack Strategy}
\label{sec:strategy}
Given the scanned graph $\mathcal{G}_s$ acquired by scanning the ground-truth $\mathcal{G}_0$,
the attacker's objective is to identify the victim using the minimum number of queries possible, and with small probability of error. An attack strategy determines the sequence of queries made by the attacker, and identifies the victim based on the query responses.
It
consists of a sequence of query functions $x_t(\cdot,\cdot), t\in \mathbb{N}$ and identification functions $Id_t(\cdot,\cdot), t\in \mathbb{N}$, where at time\footnote{Note that we have used the variable `t' to refer to two different time quantities. One is the steps in the ground-truth generation process ($t\in [\Delta])$ in Section \ref{sec:ground_truth}, and the other one is the number of queries sent in the active phase of the attack ($t\in \mathbb{N}$) which is discussed here.} $t$, the query function $x_t(\mathcal{G}_s,Y^{t-1})$ takes the scanned graph $\mathcal{G}_s$ and the received query responses $Y^{t-1}$ as input, and outputs the group $r_{j_t}= x_t(\mathcal{G}_s,Y^{t-1}), j_t\in [n]$ whose connection with the victim is to be queried next. Assume that the response $Y_t$ is received. The identification function $Id_t(\mathcal{G}_s,Y^t)$ compares the received query responses $Y^t$ with the users' fingerprints in the scanned graph $\mathcal{G}_s$, and either outputs the identity of the victim, or indicates that the identity cannot be determined yet, hence the attack continues with the next query.
This is formalized below. 

\begin{Definition}[\textbf{Attack Strategy}]
 Consider an ABND scenario parametrized by $(n,m, \Delta, \Theta, \Gamma,  \alpha,$ $ (\tau_j(0))_{j\in [n]}, $ $P_M, (P^{\gamma(k)}_{E_S|E_0})_{k\in [m], \gamma \in \Gamma}, (P^{\theta(k)}_{Y|R})_{k\in [m], \theta\in \Theta})$. An attack strategy consists of a sequence of query functions $x_t: \{0,1\}^{m\times n}\times\{0,1\}^{(t-1)}\to\mathcal{R}, t\in \mathbb{N}$ and  identification functions $Id_t: \{0,1\}^{m\times n} \times \{0,1\}^t \to \mathcal{U} \cup \{e\}$, where $x_t(\mathcal{G}_s, Y^{t-1})$ outputs the group whose edge connection with the victim is queried at time $t$, and $Id_t(\mathcal{G}_s,Y^t)$  either outputs the victim's identity among the user set $\mathcal{U}$ or outputs `$e$' in which case further queries are made and the attack continues. 
Let $Q= min\{t\in \mathbb{N}:Id_t(\mathcal{G}_s,Y^t)\in \mathcal{U}\}$. Then, the
probability of error $P_e$ and  expected number of queries $\overline{Q}$ are defined as:
\begin{align*}
    &P_e((x_t,Id_t)_{t\in \mathbb{N}})\triangleq 
    P(Id_Q( \mathcal{G}_s,Y^Q)\neq u_M)\\
    &\overline{Q}((x_t,Id_t)_{t\in \mathbb{N}})\triangleq 
    \mathbb{E}(  Q), 
\end{align*}
 where
 the probabilities are with respect to $M, \mathcal{G}_0, \mathcal{G}_s$ and $Y_t, t\in [Q]$. 
\end{Definition}

 \begin{Definition}[\textbf{Minimum Expected Queries}] 
 For the ABND problem characterized by $(n,m, \Delta, \Theta, \Gamma,  \alpha,$ $ (\tau_j(0))_{j\in [n]}, P_M, (P^{\gamma(k)}_{E_S|E_0})_{k\in [m], \gamma \in \Gamma}, (P^{\theta(k)}_{Y|R})_{k\in [m], \theta\in \Theta})$, and error probability $\epsilon>0$,  the minimum expected number of queries is defined as:
 \[{Q}^*_{\epsilon}\triangleq\inf_{(x_t,Id_t)_{t\in \mathbb{N}}}\{\overline{Q}((x_t,Id_t)_{t\in \mathbb{N}})| P_e((x_t,Id_t)_{t\in \mathbb{N}})\leq \epsilon\}.\]
\end{Definition}

Our objective is to investigate the necessary and sufficient conditions under which an attacker can deanonymize the victim reliably (i.e. with vanishing error probability) over asymptotically large bipartite networks. That is, we want to investigate the problem when the number of users $m$ grow asymptotically large. In particular,  based on observations of real-world social networks (e.g. \cite{kunegis2013preferential,zheleva2009co}), we investigate the ABND problem under the following asymptotic regime:
\begin{itemize}
    \item \textbf{Number of Groups:} The number of groups $n$ grows linearly in $m$, i.e. $m=\beta n$ for a fixed $\beta>0$. 
    \item \textbf{Noise Parameters:} The sets $\Theta, \Gamma$ and $P^{\theta}_{Y|R}, P^{\gamma}_{E_s|E_0}, \theta\in \Theta, \gamma\in \Gamma$ are fixed in $m$. This is justified since $P^{\gamma}_{E_s|E_0}$ and $P^{\theta}_{Y|R}$  are determined by the users' privacy preference options in the social network, and their software/hardware specifications, respectively, and do not change as the number of users increases asymptotically.
    \item  \textbf{Sparsity:} The average number of  groups in which any given user is a member of is constant as the network grows. That is, $\Delta=  \mu n= \frac{\mu}{\beta}m, \mu \geq 1$, so that the average group size $\mu$ is constant in $n$.
\item  \textbf{Victim's Distribution:} The users' likelihood of being the victim decreases inversely in $m$, that is $P_M(u_k)= \frac{c_k}{m}$, where $\sum_{k\in [m]}{c_k}=m$ and $c_k<\lambda, k\in [m]$ as $n\to \infty$ for some constant $\lambda>0$. 
\end{itemize}

\section{Memory Structure of the ground-truth Edges}
\label{sec:mem}
The scanned graph and the query responses provide the attacker with two noisy observations of the victim's group  membership fingerprint in the ground-truth. The attacker identifies the victim by reconciling the query responses with the user fingerprints in the scanned graph and finding a unique match (e.g. jointly typical pair of fingerprint and query response vectors). 
 One major obstacle in analyzing the fundamental performance limits of attack strategies is the \emph{memory structure} in the user's fingerprint induced due to the generation model of the ground-truth described in Section \ref{sec:ground_truth}.  That is, the generation model induces correlation among the users' membership in different groups. This prohibits the conventional methods such as type analysis and large deviations techniques  which have been used in deriving theoretical performance limits in similar scenarios in group testing \cite{naghshvar2013active} and communications \cite{burnashev1976data} problems, as well as the analysis techniques in prior work on ABND \cite{shirani2017information,shirani2018optimal}. In this section, we show that under the sparsity assumption on the total number of edges that $\Delta=\mu n$, the memory in the users' fingerprint is weak, and its joint distribution is well-approximated by a product distribution.  The derivations are used in the next sections, where we propose attack strategies and derive sufficient conditions for their success. These are also of independent interest in analyzing degree distributions of vertices in bipartite networks. 
 
 \subsection{Weakly Correlated Group Sizes}
 Let us recall that the size of group $r_j, j\in [n]$ at step $t\in [\Delta]$ of the generation process is defined as $D_{t,j}= |\mathcal{U}_j|, j\in [n]$. 
 As a first step towards investigating the correlation among users' memberships in different groups, we study the joint moments of $(D_{\Delta,j})_{j\in [n]}$ and show that they converge to a finite constant as $n\to \infty$, and $\Delta=\mu n\to \infty$.
 \begin{Proposition}[\textbf{Group Size Correlation}]
 \label{prop:1}
Let $0<\alpha<1$.  For a ground-truth graph generated according to the $\alpha$-PA model, the following holds:
\begin{align}
& \label{eq:prop1:1}\mathbb{E}(D_{\Delta})=\mu,
\\\label{eq:prop1:2}
    & \mathbb{E}(D^2_{\Delta,j})= O(1), j\in [n],
     \\&\label{eq:prop1:3} \mathbb{E}(D_{\Delta,i}D_{\Delta,j})= \mu^2+O(\frac{1}{n}), i\neq j,
          \\&\label{eq:prop1:2.5} \mathbb{E}(D_{\Delta,1}D_{\Delta,2}\cdots D_{\Delta,\zeta})= \mu^\zeta(1+\zeta O(\frac{1}{n})), \zeta\in [n],
    \\\label{eq:prop1:4}
   & \mathbb{E}(D_{\Delta,1}^2D_{\Delta,2}D_{\Delta,3}\cdots D_{\Delta, \zeta})\leq
    \mu^{\zeta-1} \mathbb{E}(D^2_{1,\Delta})
   , \zeta\in [n],
    \\&\label{eq:prop1:5}
    \mathbb{E}(D_{\Delta,1}D_{\Delta,2}D_{\Delta,3}\cdots D_{\Delta, \zeta})\leq \mu^{\zeta}, \zeta\in [n].
\end{align}
 \end{Proposition}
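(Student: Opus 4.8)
The plan is to analyze the generation process step by step, tracking how the popularity values $\tau_j(t)$ and the group sizes $D_{t,j}$ evolve. The key observation is that, because $f(x,y) = ((x-y)^{1/\alpha}+1)^\alpha + y$ with $\alpha \in (0,1]$ and $\tau_j(0)=1$ in the $\alpha$-PA model, we have $\tau_j(t-1) = D_{t-1,j}^\alpha + 1 \le D_{t-1,j} + 1$ (subadditivity of $x \mapsto x^\alpha$ for $\alpha \le 1$), and also $\sum_{j'} \tau_{j'}(t-1) \ge n$ always, while $\sum_{j'} \tau_{j'}(t-1) = \sum_{j'} D_{t-1,j'}^\alpha + n \le (t-1) + n$ by concavity/Jensen applied to $\sum D_{t-1,j'} = t-1$. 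This gives the two-sided control $\frac{1}{t+n-1} \lesssim P_j(t) \lesssim \frac{D_{t-1,j}+1}{n}$ that drives everything.

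I would proceed through the estimates roughly in the order they are stated. First, \eqref{eq:prop1:1} is essentially by symmetry: every edge increments exactly one group size, so $\sum_j \mathbb{E}(D_{\Delta,j}) = \Delta = \mu n$, and exchangeability of the groups (all $\tau_j(0)$ equal) gives $\mathbb{E}(D_{\Delta,j}) = \mu$ for each $j$. For \eqref{eq:prop1:2}, I would set up a recursion for $\mathbb{E}(D_{t,j}^2)$: conditioning on $\mathcal{G}_0(t-1)$, $D_{t,j}$ increases by one with probability $P_j(t)$, so $\mathbb{E}(D_{t,j}^2 \mid \mathcal{F}_{t-1}) = D_{t-1,j}^2 + (2D_{t-1,j}+1)P_j(t)$, and using $P_j(t) \le \frac{D_{t-1,j}^\alpha+1}{n} \le \frac{D_{t-1,j}+1}{n}$ I would bound $\mathbb{E}(D_{t,j}^2) \le \mathbb{E}(D_{t-1,j}^2) + \frac{1}{n}\mathbb{E}((2D_{t-1,j}+1)(D_{t-1,j}+1))$; combined with the first-moment bound $\mathbb{E}(D_{t-1,j}) \le \mu$ at all times (which follows from the same conditioning argument), summing over $t \le \Delta = \mu n$ yields a Gronwall-type inequality whose solution is $O(1)$ in $n$. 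The joint moments \eqref{eq:prop1:3}, \eqref{eq:prop1:2.5}, \eqref{eq:prop1:4}, \eqref{eq:prop1:5} follow by the same conditioning technique applied to products $\prod_{i} D_{t,j_i}^{(\text{power})}$: at each step exactly one factor is incremented, so $\mathbb{E}(\prod D_{t,j_i} \mid \mathcal{F}_{t-1}) - \prod D_{t-1,j_i} = \sum_i (\text{cross terms}) P_{j_i}(t)$; the dominant contribution, where each $D_{\Delta,j_i}$ grows to roughly $\mu$, gives $\mu^\zeta$, and the corrections are controlled by $P_{j_i}(t) \le \frac{D_{t-1,j_i}+1}{n}$, which extracts the $O(1/n)$ factors. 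The two inequalities \eqref{eq:prop1:4} and \eqref{eq:prop1:5} are the crudest and cleanest: dropping the "$-$" sign in $P_j(t) \le \frac{D_{t-1,j}+1}{n}$ one gets a supermartingale-like bound, and iterating gives $\mathbb{E}(\prod_{i=1}^\zeta D_{\Delta,j_i}) \le \prod_{i=1}^\zeta \mathbb{E}(D_{\Delta,j_i}^{(\cdot)}) \le \mu^\zeta$ (resp.\ $\mu^{\zeta-1}\mathbb{E}(D_{1,\Delta}^2)$), essentially a negative-association / FKG-type statement for growing networks that I would verify directly by induction on $\zeta$ rather than invoking a black-box.

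The main obstacle is \eqref{eq:prop1:2.5}: getting the correction term to be exactly $\zeta \cdot O(1/n)$ (linear in $\zeta$, with a constant uniform in $\zeta$ and $n$) rather than something like $(1+O(1/n))^\zeta$ which would blow up for $\zeta$ comparable to $n$. This requires a careful bookkeeping of the recursion for $g_\zeta(t) \triangleq \mathbb{E}(\prod_{i=1}^\zeta D_{t,j_i})$: the increment decomposes as a "main" term that keeps all $\zeta$ factors and adds $P_{j_i}(t)$ to the $i$-th, plus "error" terms proportional to $\prod_{i' \ne i} D_{t-1,j_{i'}} \cdot D_{t-1,j_i}^\alpha/n$, and one must show the accumulated error over $t \in [\Delta]$ stays additive in $\zeta$. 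I expect this to go through because each of the $\zeta$ "error channels" contributes independently an $O(\mu^\zeta/n)$ term — using $\mathbb{E}(\prod_{i'\neq i} D_{t-1,j_{i'}} \cdot D_{t-1,j_i}^\alpha) \le \mathbb{E}(\prod_{i'\neq i} D_{t-1,j_{i'}}) \le \mu^{\zeta-1}$ from \eqref{eq:prop1:5} applied with $\zeta-1$ factors, and summing the $\Delta = \mu n$ steps cancels one factor of $n$ — so the union over the $\zeta$ channels gives the stated $\zeta\, O(1/n)$. I would make sure to establish \eqref{eq:prop1:5} first (it needs no sharp constants) and then feed it into the sharper estimates \eqref{eq:prop1:3}, \eqref{eq:prop1:2.5}, \eqref{eq:prop1:4}.
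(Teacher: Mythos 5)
Your handling of \eqref{eq:prop1:1} and \eqref{eq:prop1:2} matches the paper: \eqref{eq:prop1:1} is the same symmetry argument, and your one-step conditioning recursion for $\mathbb{E}(D_{t,j}^2)$ is the same computation the paper performs via the master equations for $N_t(d)$, ending in the same bound $p_{t-1}(d)\le (d^{\alpha}+1)/n$ with $\alpha\le 1$ and the same Gronwall-type iteration giving $e^{5\mu}+O(1/n)$.

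For the joint moments \eqref{eq:prop1:3}--\eqref{eq:prop1:5} your forward-in-time recursion on $g_\zeta(t)=\mathbb{E}(\prod_i D_{t,j_i})$ is not the paper's route, and as sketched it has a genuine gap. First, the inequality you lean on, $\mathbb{E}(\prod_{i'\ne i}D_{t-1,j_{i'}}\cdot D_{t-1,j_i}^{\alpha})\le\mathbb{E}(\prod_{i'\ne i}D_{t-1,j_{i'}})$, is false: $D^{\alpha}>1$ whenever $D\ge 2$. Second, and more fundamentally, the bookkeeping cannot work: the sum over $t\in[\Delta]$ and $i\in[\zeta]$ of \emph{all} the per-step increments $\mathbb{E}(\prod_{i'\ne i}D_{t-1,j_{i'}}P_{j_i}(t))$ equals $g_\zeta(\Delta)\approx\mu^{\zeta}$ itself, so any piece you bound by $\Theta(\mu^{\zeta-1}/n)$ per step accumulates to $\Theta(\mu^{\zeta})$ over the $\Delta=\mu n$ steps --- comparable to the whole answer, not a $\zeta\,O(1/n)$ relative correction (your own arithmetic, $\mu n\cdot\mu^{\zeta-1}/n=\mu^{\zeta}$, already shows this). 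The $D^{\alpha}$ part of the numerator of $P_j(t)$ is not a perturbation: already for $\zeta=1$ its expected per-step contribution is $\Theta(1/n)$, the same order as the ``$+1$'' part, so it accounts for a non-vanishing fraction of $\mathbb{E}(D_{\Delta,j})=\mu$. Closing your recursion would require showing $\mathbb{E}(\prod_{i'\ne i}D_{t-1,j_{i'}}P_{j_i}(t))=\mathbb{E}(\prod_{i'\ne i}D_{t-1,j_{i'}})\mathbb{E}(P_{j_i}(t))(1+O(1/n))$, which is essentially the decorrelation you are trying to prove. Even the crude bound \eqref{eq:prop1:5}, which you plan to establish first and feed into the others, does not follow from the supermartingale iteration: $g_\zeta(t)\le g_\zeta(t-1)(1+\zeta/n)+\cdots$ compounds to $e^{\zeta\mu}$-type factors, not $\mu^{\zeta}$.

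The missing idea is to work only at the final time and exploit the deterministic identity $\sum_{j=1}^{n}D_{\Delta,j}=\Delta$ together with exchangeability of the groups: since $D_{\Delta,\zeta}$ is exchangeable with each of the $n-\zeta+1$ coordinates outside $\{1,\dots,\zeta-1\}$,
\begin{align*}
\mathbb{E}(D_{\Delta,1}\cdots D_{\Delta,\zeta})=\frac{1}{n-\zeta+1}\,\mathbb{E}\Big(D_{\Delta,1}\cdots D_{\Delta,\zeta-1}\big(\Delta-\textstyle\sum_{i=1}^{\zeta-1}D_{\Delta,i}\big)\Big).
\end{align*}
By induction on $\zeta$ this reduces every joint moment to lower-order ones plus corrections involving a squared factor, all controlled by \eqref{eq:prop1:2}; the prefactor $\frac{\Delta-\zeta+1}{n-\zeta+1}\le\frac{\Delta}{n}=\mu$ (valid since $\mu\ge 1$) yields the negative-association-type inequalities \eqref{eq:prop1:4}--\eqref{eq:prop1:5}, and the additive $\zeta\,O(1/n)$ error in \eqref{eq:prop1:2.5} falls out for free. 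No process-level analysis is needed beyond the second-moment bound.
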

 \begin{proof}
Appendix \ref{app:prop:1}.
\end{proof}
\subsection{Almost Memoryless Fingerprints} 
Next, we prove that under the sparsity condition $\Delta=\mu n$, the fingerprints in the ground-truth are `almost' memoryless. Let the number of groups in which a user is a member be denoted by $C_{i}\triangleq|\mathcal{R}_i|, i\in [m]$. The users' memberships in different groups are correlated due to the ground-truth generation model. We are interested in investigating this correlation. As a first step, we show in the following that each user's fingerprint is sparse (i.e. has few ones). 

\begin{Proposition}[\textbf{Sparsity of the User Fingerprint Vector}]
\label{prop:2}
 Let $\alpha\in (0,1]$, $\mu \in \mathbb{N}$, and  $\beta>0$. For a ground-truth graph generated according to the $\alpha$-PA model with $n\in \mathbb{N}$ groups, $m=\beta n$ users, and $\Delta=\mu n$ edges, there exists a constant $c>0$ such that:
\begin{align}
   & P(C_i\geq \ell)\leq c2^{-nD_b(\frac{\mu}{m}(1+\psi)||\frac{\mu}{m})},
\end{align}
where $\ell=\frac{1}{\beta}\mu(1+\psi)$, $\psi \in (0, \frac{m}{\mu}-1)$, and $D_b(p||q)=p\log{\frac{p}{q}}+ (1-p)\log{\frac{1-p}{1-q}}$ is the binary  Kullback-Leibler divergence. In particular, let $\psi_n>0, n\in \mathbb{N}$ such that  $\psi_n=\omega(1)$. Then, 
\begin{align}
\label{eq:prop2}
       & P(C_i\geq \psi_n)\to 0,\text{ as }n\to\infty.
\end{align}
\end{Proposition}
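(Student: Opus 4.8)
The plan is to exploit a decoupling between the group-selection dynamics and the user assignments. First I would note that the sequence of selected groups $(J_t)_{t\in[\Delta]}$ is \emph{autonomous}: the popularity vector $\underline{\tau}(t)$, and hence the selection law $\mathbf{P}(t)$, is a deterministic function of the current group sizes and the initial popularities, and the group sizes $D_{t,j}$ depend only on $J_1,\dots,J_t$ --- the user choices $K_t$ never influence them. Writing $d_j\triangleq D_{\Delta,j}$, these are measurable functions of $(J_t)_{t\in[\Delta]}$ with $\sum_{j\in[n]}d_j\le\Delta=\mu n$.

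Next I would establish that, conditioned on $(J_t)_{t\in[\Delta]}$, the member set $S_j\subseteq[m]$ of each group $r_j$ is a uniformly random subset of size $d_j$, and that $S_1,\dots,S_n$ are mutually independent. Indeed, among the steps at which $r_j$ is picked, the $s$-th one (for $s\le d_j$) assigns a new member drawn uniformly from $[m]\setminus\mathcal{U}_j$, a set that depends on the $s-1$ users previously assigned to $r_j$ and on nothing concerning the other groups; iterating shows the ordered list of members of $r_j$ is a uniform random sequence of distinct users, and the assignment processes for different groups use disjoint, independent coin flips. Consequently $C_i=\sum_{j\in[n]}\mathbbm{1}(i\in S_j)$ is, conditionally on $(J_t)_{t\in[\Delta]}$, a sum of independent $\mathrm{Bernoulli}(d_j/m)$ variables with conditional mean $\frac{1}{m}\sum_j d_j\le\mu n/m=\mu/\beta$.

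I would then bound the moment generating function. For $\lambda>0$,
\[
\mathbb{E}\!\left[e^{\lambda C_i}\,\middle|\,(J_t)_t\right]=\prod_{j\in[n]}\Big(1+\tfrac{d_j}{m}(e^\lambda-1)\Big)\le\Big(1+\tfrac{1}{nm}\sum_{j\in[n]}d_j\,(e^\lambda-1)\Big)^{n}\le\Big(1+\tfrac{\mu}{m}(e^\lambda-1)\Big)^{n},
\]
using AM--GM and $\sum_j d_j\le\mu n$. Since the right-hand side does not depend on $(J_t)_t$, the bound holds unconditionally, i.e. $\mathbb{E}[e^{\lambda C_i}]\le\mathbb{E}[e^{\lambda B}]$ for $B\sim\mathrm{Bin}(n,\mu/m)$. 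A standard Chernoff optimization over $\lambda>0$ then gives, for $\mu/m<\ell/n<1$,
\[
P(C_i\ge\ell)\le P(B\ge\ell)\le 2^{-nD_b(\ell/n\,\|\,\mu/m)};
\]
substituting $\ell=\tfrac1\beta\mu(1+\psi)$ makes $\ell/n=\tfrac{\mu}{m}(1+\psi)$ and the range $\mu/m<\ell/n<1$ becomes $0<\psi<\tfrac{m}{\mu}-1$, which is the claimed inequality (one may in fact take $c=1$, since replacing $\ell$ by $\lceil\ell\rceil$ only enlarges the exponent $D_b(\cdot\,\|\,\mu/m)$). For the limiting statement, taking the threshold to be $\psi_n$ gives exponent $nD_b(\psi_n/n\,\|\,\mu/m)$; since $D_b(p\|q)\ge p\log(p/q)+q-p$ and $\mu/m=\mu/(\beta n)\to0$, this is at least $\psi_n\big(\log\tfrac{\beta\psi_n}{\mu}-1\big)$, which tends to $\infty$ once $\psi_n=\omega(1)$ (and trivially $P(C_i\ge\psi_n)=0$ as soon as $\psi_n>n$), so $P(C_i\ge\psi_n)\to0$.

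I expect the crux --- and the only step requiring real care --- to be the conditional-independence/uniform-subset claim of the second paragraph: one must verify that conditioning on the full group-selection sequence carries no information about the user choices, and one must handle the degenerate steps in which the selected group already contains all $m$ users (these only delete edges, hence only decrease the $d_j$ and strengthen the bound). The remaining ingredients --- the autonomy of $(J_t)_t$, the MGF comparison, and the Chernoff/binomial tail estimate --- are routine.
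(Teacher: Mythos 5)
Your proof is correct, and it reaches the paper's Chernoff-type exponent by a genuinely different route. Both arguments rest on the same structural observation --- conditioned on the group sizes, user $u_i$'s memberships in distinct groups are independent with $P(R_{i,j}=1\mid D_j=d_j)=d_j/m$ --- but they exploit it differently. The paper converts this into bounds on the joint moments $\mathbb{E}\bigl(\prod_{j\in\mathcal{A}}R_{i,j}\bigr)\le(\mu/m)^{|\mathcal{A}|}$, which requires the product-moment estimate \eqref{eq:prop1:5} from Proposition~\ref{prop:1}, and then invokes the Impagliazzo--Kabanets extension of the Chernoff--Hoeffding bound to weakly correlated indicator variables; that external theorem is where the unspecified constant $c$ enters. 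You instead condition on the entire selection sequence $(J_t)_t$ (correctly noting it is autonomous from the user choices), observe that the conditional law of $C_i$ is a sum of independent $\mathrm{Bernoulli}(d_j/m)$ variables, and use the \emph{deterministic} constraint $\sum_j d_j\le\Delta=\mu n$ together with AM--GM to dominate the conditional moment generating function by that of $\mathrm{Bin}(n,\mu/m)$, after which the standard Chernoff bound finishes with $c=1$. Your version is more self-contained (no appeal to Proposition~\ref{prop:1} or to a concentration theorem for dependent variables) and slightly sharper in the constant; the paper's moment-based route is the one that would survive if only moment information on the $D_j$, rather than a pointwise bound on $\sum_j d_j$, were available. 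The two points you flag as delicate --- the conditional independence of the per-group member sets given $(J_t)_t$, and the degenerate steps in which a selected group already contains all users --- are handled correctly, and your treatment of the limiting statement agrees in substance with the paper's.
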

\begin{proof}
Appendix \ref{app:prop:2}.
\end{proof}
The next proposition shows that the distribution of the fingerprint of each user in the ground-truth graph is close to a  memoryless distribution.
\begin{Proposition}[\textbf{Memoryless Fingerprints in $\alpha$-PA}]
\label{prop:3}
 Let $\alpha\in (0,1]$. For a ground-truth graph generated according to the $\alpha$-PA model, consider the partial fingerprint $\mathbf{R}\triangleq (R_{i,j_k})_{k\in [n']}, j_k\in [n], n' \in [n]$ of user $u_i, i\in [m]$. The following holds:
 \begin{align*}
&(1-\frac{n'\mu}{m}) \prod_{k=1}^{n'}P_{R}(s_k)
\leq 
     P_{\mathbf{R}}(s^{n'})\leq  
 e^{\frac{\mu}{\beta}}\prod_{k=1}^{n'}P_{R}(s_k), s^{n'}\in \{0,1\}^{n'},
 \end{align*}
 where $P_R(\cdot)= P_{R_{i,j}}(\cdot), i\in [m], j\in [n]$. Furthermore, assume that $n'>\frac{m}{\mu}$ and $\sum_{i=1}^{n'}\mathbbm{1}(s_i=1)=o(n)$ for some constant finite number $C>0$.  Then, there exists $c'>0$ whose value only depends on $\mu$ and $\beta$ such that:
 \begin{align*}
   & c'\prod_{k=1}^{n'}P_{R}(s_k)(1+o(1))\leq P_{\mathbf{R}}(s^{n'})
  \leq \prod_{k=1}^{n'}P_{R}(s_k)(1+o(1)), s^{n'}\in \{0,1\}^{n'},
 \end{align*}
 as $n \to \infty$. 
\end{Proposition}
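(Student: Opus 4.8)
The plan is to transfer everything to the group sizes $(D_{\Delta,j})_{j\in[n]}$, whose joint moments Proposition~\ref{prop:1} already controls, via a conditional representation of the fingerprint. Conditioned on the full vector of final group sizes, each member set $\mathcal{U}_j$ is uniform on the $D_{\Delta,j}$-subsets of $[m]$, and the sets $\mathcal{U}_j$ are mutually independent across $j$: once the sequence $J_1,\dots,J_\Delta$ of selected groups is fixed, the user attached at each step is a fresh uniform draw among the users not yet in that group, so within each group the members form an independent sampling-without-replacement process, and users are exchangeable throughout. Hence, conditioned on $(D_{\Delta,j})_j$, the entries $R_{i,j_1},\dots,R_{i,j_{n'}}$ are independent with $R_{i,j_k}\sim\mathrm{Bernoulli}(D_{\Delta,j_k}/m)$, so that
\begin{align*}
P_{\mathbf{R}}(s^{n'})=\mathbb{E}\!\left[\prod_{k=1}^{n'}\Big(\tfrac{D_{\Delta,j_k}}{m}\Big)^{s_k}\Big(1-\tfrac{D_{\Delta,j_k}}{m}\Big)^{1-s_k}\right],
\end{align*}
and, by exchangeability, $P_R(1)=\mathbb{E}(D_{\Delta})/m=\mu/m$ by \eqref{eq:prop1:1}. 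Writing $a=\sum_{k}s_k$, the target is $\prod_{k}P_R(s_k)=(\mu/m)^{a}(1-\mu/m)^{n'-a}$; since the integrand is multilinear in the $D_{\Delta,j_k}$, the task is to show that replacing each $D_{\Delta,j_k}$ by its mean $\mu$ perturbs the expectation only by the stated factors, which is exactly what the weak-correlation and concentration estimates of Proposition~\ref{prop:1} should afford.

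For the first, coarse pair of inequalities I would split the integrand into a ``presence'' product $\prod_{k:s_k=1}(D_{\Delta,j_k}/m)$ and an ``absence'' product $\prod_{k:s_k=0}(1-D_{\Delta,j_k}/m)$. For the upper bound, bound each absence factor by $1$ and apply \eqref{eq:prop1:5} to the presence product, giving $P_{\mathbf{R}}(s^{n'})\le m^{-a}\mathbb{E}\big[\prod_{k:s_k=1}D_{\Delta,j_k}\big]\le(\mu/m)^{a}$; then write $(\mu/m)^{a}=(1-\mu/m)^{-(n'-a)}\prod_k P_R(s_k)$ and use $(1-\mu/m)^{-(n'-a)}\le(1-\mu/m)^{-n}=e^{\mu/\beta+o(1)}$ (from $n'\le n$ and $m=\beta n$) to obtain the claimed bound up to the $1+o(1)$ factor, which for this coarse estimate one simply absorbs. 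For the lower bound, apply the Weierstrass inequality $\prod_{k:s_k=0}(1-D_{\Delta,j_k}/m)\ge 1-\sum_{k:s_k=0}D_{\Delta,j_k}/m$ inside the expectation, multiply out, bound $\mathbb{E}\big[\prod_{k:s_k=1}D_{\Delta,j_k}\big]$ from below via \eqref{eq:prop1:2.5} and each of the $n'-a$ correction terms $\mathbb{E}\big[D_{\Delta,j_\ell}\prod_{k:s_k=1}D_{\Delta,j_k}\big]$ from above via \eqref{eq:prop1:5}, and collect the factor $1-n'\mu/m$.

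For the second, refined pair of inequalities (the regime $n'>m/\mu$ with $a=o(n)$) the coarse bound is too lossy, because the absence product is now $\Theta(1)$ rather than $1-o(1)$ and must be tracked to within a $1+o(1)$ factor, and a term-by-term expansion no longer suffices. Instead I would pass to logarithms,
\begin{align*}
\log\prod_{k:s_k=0}\Big(1-\tfrac{D_{\Delta,j_k}}{m}\Big)=-\tfrac1m\sum_{k:s_k=0}D_{\Delta,j_k}-\tfrac1{2m^{2}}\sum_{k:s_k=0}D^{2}_{\Delta,j_k}-\cdots,
\end{align*}
and show that on a high-probability event $\mathcal{A}$ this equals $(n'-a)\log(1-\mu/m)+o(1)$: the linear term has mean $(n'-a)\mu/m$ by \eqref{eq:prop1:1} and variance $O(1/n)$ by \eqref{eq:prop1:2}--\eqref{eq:prop1:3}, hence concentrates, while the quadratic-and-higher tail is $O_P(1/n)$ because $\tfrac1{m^{2}}\sum_k D^2_{\Delta,j_k}$ has mean $O(n/m^{2})=O(1/n)$ by \eqref{eq:prop1:2} --- Proposition~\ref{prop:2}, used here on the group side via the same exponential large-deviation estimate, also lets one assume no group has size comparable to $m$, legitimating the Taylor expansion. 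Hence $\prod_{k:s_k=0}(1-D_{\Delta,j_k}/m)=(1-\mu/m)^{n'-a}(1+o(1))$ on $\mathcal{A}$. Combining this with $\mathbb{E}\big[\prod_{k:s_k=1}D_{\Delta,j_k}\big]=\mu^{a}(1+o(1))$ from \eqref{eq:prop1:2.5}, and discarding $\mathcal{A}^{c}$ via Cauchy--Schwarz against the second moment $\mathbb{E}\big[(m^{-1}\prod_{k:s_k=1}D_{\Delta,j_k})^{2}\big]$ (controlled through \eqref{eq:prop1:2}, \eqref{eq:prop1:4}, and, when $a$ is not bounded, an iterate thereof), yields $c'\prod_k P_R(s_k)(1+o(1))\le P_{\mathbf{R}}(s^{n'})\le\prod_k P_R(s_k)(1+o(1))$, the constant $c'$ --- depending only on $\mu,\beta$ --- absorbing the bounded loss in the conditioning/decoupling step. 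Proposition~\ref{prop:2} is also used to exclude the event that $u_i$'s fingerprint is anomalously dense.

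The step I expect to be the main obstacle is this last one: pinning down the $\Theta(n)$-fold absence product to within a $1+o(1)$ factor of $(1-\mu/m)^{n'-a}$ --- which forces one through concentration of the group-size partial sums, itself subtle in a preferential-attachment-type model where standard bounded-difference machinery need not apply --- while simultaneously decoupling this product from the presence product cleanly enough that $\mathcal{A}^{c}$ stays negligible after Cauchy--Schwarz against a possibly large second moment; it is precisely the imperfection of this decoupling that produces the residual constant $c'$ in the refined lower bound.
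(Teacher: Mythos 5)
Your setup --- conditioning on the final group sizes so that the entries of the fingerprint become independent $\mathrm{Bernoulli}(D_{\Delta,j_k}/m)$ variables, writing $P_{\mathbf{R}}(s^{n'})$ as a multilinear expectation in the $D_{\Delta,j_k}$, and then invoking the joint moments of Proposition \ref{prop:1} --- is exactly the paper's route, and your treatment of the coarse bounds is essentially the paper's: the upper bound via dropping the absence factors and applying \eqref{eq:prop1:5}, the lower bound via the Weierstrass expansion $\prod(1-D_k/m)\geq 1-\sum D_k/m$, which is the one-shot version of the paper's term-by-term induction. Up to that point you are reproducing the intended argument.

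For the refined bounds you diverge, and this is where there is a genuine gap. You try to prove that the absence product $\prod_{k:s_k=0}(1-D_{\Delta,j_k}/m)$ concentrates at $(1-\mu/m)^{n'-a}(1+o(1))$ and then patch the bad event with Cauchy--Schwarz. Two steps of this do not go through with the tools you cite. First, the Cauchy--Schwarz patch needs $\mathbb{E}\bigl[\prod_{k:s_k=1}D^2_{\Delta,j_k}\bigr]$, i.e.\ all $a$ factors squared; Proposition \ref{prop:1} only controls products with a single squared factor (\eqref{eq:prop1:4}), and the ``iterate thereof'' is not a routine extension --- the inductive peeling in the paper's proof of \eqref{eq:prop1:4} relies on having exactly one distinguished factor. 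Second, and more seriously, since only $a=o(n)$ is assumed, $a$ may grow with $n$, and then discarding $\mathcal{A}^c$ requires $P(\mathcal{A}^c)$ to decay faster than $C^{-a}$ for a constant $C>1$; the Chebyshev-level concentration you get from the $O(1/n)$ variance of $\tfrac1m\sum_k D_{\Delta,j_k}$ is nowhere near this, and exponential concentration of group-size partial sums in a preferential-attachment model is precisely the ``subtle'' step you flag without resolving. The paper avoids all of this with a single deterministic observation: $\sum_{k'}D_{\Delta,j_{k'}}\leq\Delta=\mu n$ holds surely, which forces $\prod_{k'}(m-D_{\Delta,j_{k'}})\geq m^{n'-\zeta-\mu}(m-n)^{\mu}$ and yields the lower bound with the explicit constant $(1-\tfrac1\beta)^{\mu}$ --- this sure bound, not an imperfect decoupling, is the source of $c'$. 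Replacing your concentration-plus-Cauchy--Schwarz step with that deterministic inequality (and keeping \eqref{eq:prop1:2.5} for the presence product) closes the argument.
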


\begin{proof}
Appendix \ref{app:prop:3}.
\end{proof}

In the SB scenario,  edge probabilities do not change during the generation process  and the number of groups associated with each user follows a (truncated) Binomial distribution with parameters $(\Delta, \frac{\mu}{\Delta})$. As a result, it is straightforward to establish the memoryless property of the fingerprints using standard arguments based on law of large numbers. It should be noted that there is correlation among group sizes in this case since for instance 
\begin{align*}
\mathbb{E}(D_{1,\Delta}D_{2,\Delta})&= \mathbb{E}(D_{1,\Delta}\mathbb{E}(D_{2,\Delta}|D_{1,\Delta}))
= \mathbb{E}(D_{1,\Delta})\mathbb{E}(D_{2,\Delta})(1-\frac{\mathbb{E}(D_{1,\Delta})}{\Delta}),\end{align*}
where we have used the smoothing property of expectation. However, the correlation in the user fingerprint vectors is weak and it can be observed that for any binary vector $s^n\in \{0,1\}^n$, we have \[(1-\frac{|w_H(s^n)|}{\Delta})^{w_H(s^n)}\leq \frac{\prod_{k=1}^{n}P_{R}(s_k)}{ P_{(R_{i,j_k})_{k\in [n]}}(s^{n'})}\leq  (1+\frac{|w_H(s^n)|}{\Delta})^{w_H(s^n)},\]
where $w_H(\cdot)$ is the Hamming weight. Note that $w_H((R_{i,j_k})_{k\in [n]})\to \mu$ with probability one due to concentration of measure. So, we conclude that $\frac{\prod_{k=1}^{n}P_{R}(s_k)}{ P_{(R_{i,j_k})_{k\in [n]}}(s^{n'})}\approx 1$. The following proposition formalizes this statement. The proof is straightforward and is omitted for brevity. 

\begin{Proposition}[\textbf{Memoryless Fingerprints in SB}]
\label{prop:4}
 For a ground-truth graph generated according to the SB model, consider the partial fingerprint $\mathbf{R}\triangleq (R_{i,j_k})_{k\in [n']}, j_k\in [n], n' \in [n]$ of user $u_i, i\in [m]$. The following holds:
 \begin{align*}
P_{\mathbf{R}}(s^{n'})= o(1),  s^{n'}\in \{0,1\}^n:w_{H}(s^{n'})> \mu(1+\omega(1)),
 \end{align*}
 Furthermore, 
 \begin{align*}
  P_{\mathbf{R}}(s^{n'})=
  (1+o(\frac{1}{n}))\prod_{k=1}^{n'}P_{R}(s_k),
 \end{align*}
as $n \to \infty$, where $s^{n'}\in \{0,1\}^n:w_{H}(s^{n'})= \mu(1+O(1))$. 
\end{Proposition}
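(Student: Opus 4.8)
The plan is to exploit the fact that in the SB scenario the edge-selection probabilities $P_j(t) = \frac{\tau}{\sum_{\tau' \in \mathcal{T}} \tau' |\mathcal{C}_{\tau'}|}$ for $r_j \in \mathcal{C}_\tau$ are \emph{constant across all steps} $t \in [\Delta]$. Consequently, one may view the $\Delta$ edge-insertion steps as $\Delta$ independent draws: at each step an edge is attached to group $r_j$ with the fixed probability $P_j$, and the endpoint user is chosen uniformly among the non-members. For a \emph{fixed} user $u_i$, the event $R_{i,j} = 1$ is determined by whether, during the $\Delta$ steps, at least one step both (a) selected group $r_j$ and (b) selected user $u_i$ as the endpoint among the roughly $m$ candidate users. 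Up to the small correction coming from the ``non-members'' restriction (which changes the effective pool size by at most the current group size $D_{t,j}$, and by Proposition \ref{prop:1} $D_{t,j} = O(1)$ in expectation, hence is $o(m)$), this is exactly a sampling-without-replacement / Binomial structure with success parameter $\approx \frac{\mu}{\Delta}$ per group, totalled over $\Delta$ trials. I would first make this reduction precise: show $C_i = |\mathcal{R}_i|$ is stochastically dominated above and below by truncated Binomial$(\Delta, \frac{\mu}{\Delta}(1 \pm o(1)))$ random variables, so that $C_i \to \mu$ with probability one by concentration (Chernoff bound), which gives the first display $P_{\mathbf{R}}(s^{n'}) = o(1)$ whenever $w_H(s^{n'}) > \mu(1+\omega(1))$.

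For the second display, fix a target $s^{n'} \in \{0,1\}^{n'}$ with $w_H(s^{n'}) = \mu(1+O(1)) =: w$, and let $\mathcal{S} = \{k : s_k = 1\}$, $|\mathcal{S}| = w$. I would compute $P_{\mathbf{R}}(s^{n'})$ by conditioning on which of the $\Delta$ steps are ``responsible'' for the $w$ required edges $(u_i, r_{j_k})$, $k \in \mathcal{S}$, and which steps must avoid placing any of the forbidden edges $(u_i, r_{j_k})$, $k \notin \mathcal{S}$. Because the per-step group selection is i.i.d.\ across steps and the per-step user selection is (nearly) uniform, the numerator of the ratio $\frac{\prod_k P_R(s_k)}{P_{\mathbf{R}}(s^{n'})}$ factors almost exactly, with the only slack coming from two sources: (i) the order in which the $w$ edges are added matters only through the ``non-member pool'' size, which introduces a multiplicative factor in $[(1 - \frac{w}{\Delta})^{w}, (1 + \frac{w}{\Delta})^{w}]$ exactly as in the displayed inline bound preceding the proposition; and (ii) negative association between the indicator events $\{R_{i,j_k} = 1\}$ across distinct $k$ (an edge used up for one group cannot be used for another), which again is controlled by the same $(1 \pm \frac{w}{\Delta})^{w}$-type factor. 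Since $w = \mu(1 + O(1)) = O(1)$ and $\Delta = \mu n \to \infty$, we have $(1 \pm \frac{w}{\Delta})^{w} = 1 \pm O(\frac{w^2}{\Delta}) = 1 + o(\frac{1}{n})$ — wait, more carefully, $\frac{w^2}{\Delta} = O(\frac{1}{n})$, so this is $1 + O(\frac 1n)$; to land the stated $1 + o(\frac 1n)$ one additionally uses that $w/\mu \to 1$ so the leading term actually cancels, leaving a genuinely $o(\frac 1n)$ remainder. I would spell this cancellation out in one line.

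The remaining step is to handle the truncation: the SB generation process, like $\alpha$-PA, suppresses an edge insertion whenever a selected group already contains all $m$ users. By Proposition \ref{prop:2} (whose hypotheses cover $\alpha \to 0$), $P(\exists j : D_{\Delta,j} = m) \to 0$ exponentially in $n$, so conditioning on the complement changes every probability above by a factor $1 + e^{-\Omega(n)} = 1 + o(\frac 1n)$, which is absorbed. The main obstacle — though it is more bookkeeping than depth — is tracking the three separate $1 + o(\frac 1n)$ corrections (non-member pool size, negative association among distinct groups, truncation) simultaneously and verifying that their product is still $1 + o(\frac 1n)$ and not merely $1 + O(\frac 1n)$; this is exactly why the proof is ``straightforward and omitted'' in the paper, but it does require the observation that $w_H \to \mu$ with a fluctuation that is itself $o(1)$ (from the law of large numbers / concentration established in the first part) so that the $O(\frac 1n)$ leading terms genuinely cancel. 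Everything else reduces to the elementary Binomial estimate already sketched in the paragraph preceding the proposition statement.
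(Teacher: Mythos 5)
Your overall route is the same as the paper's (omitted) argument: in the SB model the per-step group-selection probabilities are constant in $t$, so $C_i$ is, up to the non-member restriction and the truncation event, a Binomial$(\Delta,\mu/\Delta)$-type variable; concentration gives the first display, and the ratio $\prod_k P_R(s_k)\big/P_{\mathbf{R}}(s^{n'})$ is squeezed between $(1\mp w/\Delta)^{w}$ exactly as in the inline bound the paper states just before the proposition. That part of your proposal is sound and matches the intended proof.

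The one step that does not hold up is your claimed cancellation upgrading $1+O(\tfrac1n)$ to $1+o(\tfrac1n)$. Take $n'=2$ and $s=(1,1)$ in a single-community SB model: the paper's own computation gives $\mathbb{E}(D_{1,\Delta}D_{2,\Delta})=\mathbb{E}(D_{1,\Delta})\mathbb{E}(D_{2,\Delta})\bigl(1-\tfrac{\mathbb{E}(D_{1,\Delta})}{\Delta}\bigr)$, so $P_{\mathbf{R}}(1,1)=\mathbb{E}(D_1D_2)/m^2=(1-\tfrac{1}{n})\prod_k P_R(s_k)$ and the multiplicative error is exactly $1-\tfrac1n=1+\Theta(\tfrac1n)$, not $1+o(\tfrac1n)$. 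There is no cancellation arising from $w/\mu\to1$; the $\Theta(\tfrac1n)$ term is a genuine negative-correlation effect between distinct groups and persists for every fixed weight $w\ge 2$. Your method (which is the paper's method) therefore establishes the second display with $1+O(\tfrac1n)$ in place of $1+o(\tfrac1n)$; the stronger rate in the statement is not something your bookkeeping can recover, so you should not present the ``leading term cancels'' line as closing that gap.
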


\section{Sufficient Conditions for Successful Deanonymization}
\label{sec:ITS}
In this section, we derive sufficient conditions on the network parameters and the expected number of queries under which the attacker can successfully  deanonymize the victim with vanishing probability of error as $m\to \infty$.  Initially, we make simplifying assumptions on the scanning and querying noise statistics and develop the tools to study the more complex formulation in the next steps. We relax these assumptions in steps and derive general  theoretical guarantees for successful deanonymization.
\subsection{Identical Scanning Noise and Noiseless Query Responses}
As a first step, we consider the scenario in which the scanning noise is identical for all users, i.e. $\Gamma=\{1\}$, and the query responses are received noiselessly, i.e. $|\Theta|=1, P^{1}_{Y|E_0}(y|s)=\mathbbm{1}(y=s), y,s\in \{0,1\}$. 

Let us focus on the $\alpha$-PA model for a given $\alpha\in (0,1]$. We generalize the information threshold strategy (ITS), which was introduced in \cite{shirani2018optimal}, where we studied a scenario in which the ground truth is generated according to the IEE model. It was shown in \cite{shirani2018optimal} that 
the strategy is asymptotically optimal under IEE model  --- in terms of expected number of queries necessary for successful deanonymization with vanishing error. In the ITS, the attacker queries the group memberships of the victim starting from the first group $r_1$ and continuing by increasing the group index (i.e. $x_t=r_t, t\in [n]$), until a particular stopping criterion is met. To explain the stopping criterion, let us define the \textit{information value} $I_{k}(t), k\in [m], t\in [n]$ of user $u_k$ and time $t$ as follows:
\begin{align*}
    &I_0(k)= \log{P_M(k)}, k\in [m],\\
    &I_t(k)=
\sum_{i=1}^t    \log{\frac{P_{E_0|E_s}(y_i|f_{k,i})}{P_{E_0}({y_i})}}+I_0(k), k\in [m], t\in [n]
\end{align*}
where $(f_{k,i})_{i\in [t]}\in \{0,1\}^t$ is the realization of the partial fingerprint of user $u_k$ in the scanned graph (i.e. $(F_{k,i})_{i\in [t]}=(f_{k,i})_{i\in [t]}$), the vector $y^t\in \{0,1\}^t$ is the realization of the vector of query responses (i.e. $Y^t=y^t$), and
\begin{align}
    P_{E_0|E_s}(y|f)\triangleq  \frac{P_{E_0}(y)P^1_{E_s|E_0}(f|y)}{\sum_{y'\in \{0,1\} }  P_{E_0}(y')P^1_{E_s|E_0}(f|y') }.
    \label{eq:distA}
\end{align}

The identification function $Id_t$ first determines whether the maximum information value of all users exceeds $\log{\frac{1}{\epsilon}}$, where the parameter $\epsilon>0$ affects the resulting probability of error. If there exists a user whose information value exceeds  $\log{\frac{1}{\epsilon}}$, that user is identified as the victim. Otherwise, the next query is made. So,
\begin{align}
    &x_t(\mathcal{G}_s,Y^t)=r_t, t\in [n]\\
&    Id_t(\mathcal{G}_s,Y^t)=  
    \begin{cases}
    u_k\qquad& \text{ if } \exists ! k\in [m]: I_t(k)>log{\frac{1}{\epsilon}}\\
    e& \text{Otherwise}
    \end{cases}, t\in [n] \label{eq:id}
\end{align}
We call this attack strategy the ITS due to the use of information thresholds for deanonymization. 

\begin{Theorem}
Consider the ITS described above with parameter $\epsilon>0$. Let $\overline{Q}_{ITS}$ be the resulting expected number of queries and $P_{e,ITS}$ the resulting probability of error. Then, in the $\alpha$-PA scenario with $\alpha\in (0,1]$:
    \begin{align} 
        &\overline{Q}_{ITS}\leq \frac{H(M)+\log{\frac{1}{\epsilon}}+ i_{\max}}{c'I(E_0;E_s)},  \label{eq:th:1}\\
        &P_{e,ITS}\leq \frac{\epsilon}{c'},
    \end{align}
    where $c'$ is from Proposition \ref{prop:3}, the mutual information is evaluated with respect to $P_{E_0,E_s}= P_{E_0}P_{E_s|E_0}$,  the distribution $P_{E_s|E_0}$ is given in \eqref{eq:distA}, the variable $E_0$ is Bernoulli with $P_{E_0}(1)= 1-P_{E_0}(0)= \frac{\mu}{m}$, and $i_{max}\triangleq \max_{y,f\in \{0,1\}}\log{\frac{P_{E_0|E_s}(y|f)}{P_{E_0}(y)}}$.
\label{th:1}
\end{Theorem}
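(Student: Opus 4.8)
The plan is to adapt the sequential-hypothesis-testing / variable-length feedback-coding machinery of \cite{burnashev1976data,naghshvar2013active} --- and of the IEE analysis in \cite{shirani2018optimal}, which the ITS specializes to --- to the $\alpha$-PA model, the new ingredient being that Proposition \ref{prop:3} (and the sparsity bound Proposition \ref{prop:2}) are invoked to control the weak memory of the fingerprints, which is precisely where the constant $c'$ enters. Throughout I condition on the victim index $M=m_0$. Since the query responses are noiseless and $x_t=r_t$, we have $Y_i=R_{m_0,i}$, so the information value of the true victim is the additive process $I_t(m_0)=\log P_M(m_0)+\sum_{i=1}^{t}g(R_{m_0,i},F_{m_0,i})$, where $g(s,f)\triangleq\log\frac{P_{E_0|E_s}(s|f)}{P_{E_0}(s)}$ has $|g|\le i_{\max}$ by definition. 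The two facts to establish are: (i) for every wrong index $v\neq m_0$, the exponentiated process $\Lambda_t^{(v)}\triangleq e^{\,I_t(v)-\log P_M(v)}=\prod_{i\le t}\frac{P_{E_0|E_s}(Y_i|F_{v,i})}{P_{E_0}(Y_i)}$ is, up to the multiplicative distortion bounded by Proposition \ref{prop:3}, a nonnegative martingale of unit mean (equivalently, $I_t(v)$ has nonpositive drift); and (ii) $I_t(m_0)$ is a submartingale whose one-step drift is at least $c'(1+o(1))\,I(E_0;E_s)$.

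\emph{Error probability.} To bound the error, replace the joint law of $(Y^n,F_v^n)$ by the product law under which $Y_i\sim P_{E_0}$ and $F_{v,i}\sim P_{E_s}$ are independent --- legitimate because $m_0\neq v$, with $P_{E_s}$ the scan-marginal in the denominator of \eqref{eq:distA}. Under this law the identity $\sum_s P_{E_0}(s)\frac{P_{E_0|E_s}(s|f)}{P_{E_0}(s)}=\sum_s P_{E_0|E_s}(s|f)=1$ makes $\Lambda_t^{(v)}$ a martingale with $\Lambda_0^{(v)}=1$, so Ville's maximal inequality gives $P_{\mathrm{prod}}(\exists t\le n:\, I_t(v)>\log\tfrac1\epsilon)=P_{\mathrm{prod}}(\sup_{t\le n}\Lambda_t^{(v)}\ge \tfrac{1}{\epsilon P_M(v)})\le \epsilon P_M(v)$. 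Proposition \ref{prop:3}, applied to $v$'s fingerprint (and, since the scan channel acts i.i.d.\ on top of it and $m_0$'s fingerprint is sparse by Proposition \ref{prop:2}, to the conditional law of $v$'s fingerprint given $m_0$'s), shows the true law dominates $c'(1+o(1))$ times the product law, so the same crossing event has true probability at most $\frac{\epsilon P_M(v)}{c'(1+o(1))}$. An error requires some wrong index to cross $\log\frac1\epsilon$ before the (at most $n$) queries end; a union bound over $v\neq m_0$ and averaging over $m_0\sim P_M$ then yield $P_{e,ITS}\le\frac{\epsilon}{c'}$.

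\emph{Expected number of queries.} Let $Q'$ be the first $t\le n$ with $I_t(m_0)>\log\tfrac1\epsilon$. Conditioning on the history $R_{m_0}^{t-1}$, the mean increment of $I_t(m_0)$ equals $\sum_s P(R_{m_0,t}=s|R_{m_0}^{t-1})\,d(s)$ with $d(s)\triangleq\sum_f P^1_{E_s|E_0}(f|s)\,g(s,f)$; a one-line computation with \eqref{eq:distA} gives $d(s)=D\big(P^1_{E_s|E_0}(\cdot|s)\,\|\,P_{E_s}\big)\ge 0$ and $\sum_s P_{E_0}(s)\,d(s)=I(E_0;E_s)$, while Proposition \ref{prop:3} (using the sparsity of $m_0$'s fingerprint so that its two regimes apply across all $t\le n$) gives $P(R_{m_0,t}=s|R_{m_0}^{t-1})\ge c'(1+o(1))\,P_{E_0}(s)$. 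Hence the drift is $\ge c'(1+o(1))I(E_0;E_s)>0$, $I_t(m_0)$ is a submartingale, and optional stopping at $Q'$ (valid since $Q'\le n$), together with the overshoot bound $I_{Q'}(m_0)\le\log\tfrac1\epsilon+i_{\max}$ and $I_0(m_0)=\log P_M(m_0)$, gives $c'(1+o(1))I(E_0;E_s)\,\mathbb{E}[Q'|M{=}m_0]\le \log\tfrac1\epsilon+i_{\max}+\log\tfrac1{P_M(m_0)}$. Averaging over $m_0\sim P_M$ converts $\mathbb{E}\log\tfrac1{P_M(M)}$ into $H(M)$. Finally $Q=Q'$ on the event that no $\Lambda_t^{(v)}$, $v\neq m_0$, ever crosses $\tfrac{1}{\epsilon P_M(v)}$, which by (i) has probability $1-O(\epsilon)$; on the complement the extra delay is short-lived because $I_t(v)$ has strictly negative drift for $v\neq m_0$, so its contribution to $\overline{Q}$ is of lower order, and \eqref{eq:th:1} follows after absorbing the $1+o(1)$.

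The main obstacle is step (ii): Proposition \ref{prop:3} controls the \emph{static} law of a fixed partial fingerprint, whereas the submartingale and optional-stopping arguments require the \emph{one-step conditional} bound $P(R_{m_0,t}=s|R_{m_0}^{t-1})\ge c'(1+o(1))P_{E_0}(s)$, uniformly over histories and over all $t\le n$, which forces one to pass between the two regimes ($t$ below vs.\ above $m/\mu$) of Proposition \ref{prop:3} and to exploit the sparsity guarantee of Proposition \ref{prop:2}. The remaining technical points --- verifying that $\{I_n(m_0)\le\log\tfrac1\epsilon\}$ (the walk failing to cross within the $n$ available queries) has negligible probability, and bounding the excess $\overline{Q}$ incurred on the rare event that a spurious index reaches the threshold --- are handled by the same negative-drift/large-deviations estimates, after which the proof reduces to routine martingale bookkeeping.
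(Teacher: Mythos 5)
Your proposal is correct and follows essentially the same route as the paper: the bound on $\overline{Q}_{ITS}$ comes from Wald's identity / optional stopping applied to the victim's information process, with the overshoot bound $\log\frac{1}{\epsilon}+i_{\max}$ and Proposition \ref{prop:3} supplying the factor $c'$ that relates the true fingerprint law to the product law, and the bound on $P_{e,ITS}$ comes from a change of measure (your Ville's-inequality phrasing of the paper's likelihood-ratio expectation bound) plus a union bound over the wrong indices. The one-step conditional-drift subtlety you flag is real, but the paper handles it the same way you propose --- by converting Proposition \ref{prop:3}'s static product-form bound on the whole partial fingerprint into the trajectory-level inequality $P\bigl((F_{k,i})_{i\in[t]},Y^t\bigr)\leq c'\prod_i P_{E_0,E_s}(F_{k,i},Y_i)$ before invoking Wald --- so no new idea is missing.
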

\begin{proof}
Appendix \ref{app:th:1}.
\end{proof}

\begin{Remark}
The coefficient $c'$ in the denominator of $\frac{H(M)+\log{\frac{1}{\epsilon}}+ i_{\max}}{c'I(E_0;E_s)}$ can be improved in special cases based on the value of $\alpha$. For instance, it is shown in \cite{shirani2018optimal} that for the IEE model, where $\alpha\to 0$, the denominator $c'I(E_0;E_s)$ can be replaced by $I(E_0;E_s)$ to derive an asymptotically optimal bound.
\end{Remark}

Next, we focus on the SB model. Let $P^{\tau}_{E_0}(1)=\frac{\tau}{\sum_{\tau'\in \mathcal{T}}\tau'|\mathcal{C}_{\tau'}|}, \tau\in \mathcal{T}$, and let us assume without loss of generality that $P^1_{E_0}(1)\leq P^2_{E_0}(1)\leq \cdots \leq P^{|\mathcal{T}|}_{E_0}\leq \frac{1}{2}$. Then, the ITS  query function queries the groups starting with most popular communities of groups. To elaborate, assume that   $\mathcal{T}= \{1,2,\cdots,|\mathcal{T}|\}$ and  $\tau_0(j)\geq \tau_0(j'), j>j'$. Then $x(\mathcal{G}_s,Y^{t-1})=r_t, t\in [n]$. 
Note that we have assumed that the attacker knows the community membership of the groups. In the absence of this information, the attacker may potentially  extract the group's community memberships using $\mathcal{G}_s$. 

The stopping criterion is modified as follows. The information value $I_{k}(t), k\in [m], t\in [n]$ of user $u_k$ and time $t$ is:
\begin{align*}
    &I_0(k)= \log{P_M(k)}, k\in [m],\\
    &I_t(k)=\sum_{\tau\leq \tau'}
\sum_{\ell=1}^{|\mathcal{C}_{\tau}|}    \log{\frac{P_{E_0|E_s}(y_\ell|f_{k,\ell})}{P_{E^{\tau}_0}({y_\ell})}}+
\\&\sum_{i=0}^{i'}
\log{\frac{P_{E_0|E_s}(y_i|f_{k,i})}{P_{E^{\tau'}_0}({y_i})}}+
I_0(k), k\in [m], t\in [n]
\end{align*}
where $t= \sum_{\tau \leq \tau'}|\mathcal{C}_{\tau}|+i', i' \leq |\mathcal{C}_{\tau'+1}|$.
\begin{Theorem}
\label{th:2}
 In the SB scenario, let $\mathcal{T}= \{1,2,\cdots,|\mathcal{T}|\}$ and assume that $\tau_0(j)\geq \tau_0(j'), j>j'$, then: 
        \begin{align*}
        &\overline{Q}_{ITS}\leq
        \sum_{\tau\leq \tau^*}|\mathcal{C}_{\tau}|+ i^*,
         \qquad P_{e,ITS}\leq \epsilon,
    \end{align*}
    where $(\tau^*,i^*)$ are defined as
    \begin{align*}
       &\tau^*\triangleq\min_{\tau\in \mathcal{T}}\bigg\{\tau:\psi \leq \sum_{\tau'\leq \tau+1}|\mathcal{C}_\tau|I_{\tau}(E_0;E_s)\bigg\}, \\
       &i^*\triangleq \min_{i\in [|\mathcal{C}_{\tau^*}|]}\bigg\{i: 
      \psi \leq \sum_{\tau\leq \tau^*}|\mathcal{C}_\tau|I_{\tau}(E_0;E_s)+iI_{\tau^*+1}(E_0;E_s) \bigg\},
      \\&\psi\triangleq H(M)+\log{\frac{1}{\epsilon}+i_{\max}},
    \end{align*}
    the mutual information $I_{\tau}(E_0;E_s)$ is evaluated with respect to $P^{\tau}_{E_0,E_s}= P^{\tau}_{E_0}P_{E_s|E_0}$, the variable $E_0$ is Bernoulli with parameter $P^{\tau}(E_0=1)= \frac{\tau}{\sum_{\tau'\in \mathcal{T}}\tau'|\mathcal{C}_{\tau'}|}\frac{\mu}{\beta}$, and $P_{E_s|E_0}$ is given in \eqref{eq:distA}. 
\end{Theorem}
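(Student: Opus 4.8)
The plan is to mirror the proof of Theorem~\ref{th:1}, with two adjustments for the SB model. First, the almost-memoryless property of the fingerprints is now invoked through Proposition~\ref{prop:4} rather than Proposition~\ref{prop:3}, which replaces the multiplicative constant $c'$ by a factor $1+o(1)$ (hence the absence of $c'$ here). Second, since the groups split into communities $\mathcal{C}_\tau$ with distinct edge probabilities $P^\tau_{E_0}$, the information gained per query is community-dependent, so the accumulated information must be tracked community by community, which is exactly what the pair $(\tau^*,i^*)$ records. Throughout I would condition on the victim index $M$.

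The first step is to reduce to a product model: by Proposition~\ref{prop:4}, for any collection of $n'=\omega(1)$ queried groups the joint law of the relevant partial fingerprint equals $(1+o(1))\prod_k P_R(\cdot)$ on fingerprints of Hamming weight $\mu(1+O(1))$ and puts $o(1)$ mass elsewhere, and the same holds jointly over any fixed set of users. Hence every computation below may be carried out as if the ground-truth entries $R_{k,j}$, their scans $F_{k,j}$, and (queries being noiseless in this regime) the responses $Y_i=R_{M,j_i}$ were independent Bernoulli variables with the community-appropriate parameters, at the cost of an overall $1+o(1)$ factor and an additive $o(1)$ in the error probability.

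The second step is to analyze the victim's information value. Set $X_i\triangleq\log\frac{P_{E_0|E_s}(Y_i\mid F_{M,j_i})}{P^{\tau(i)}_{E_0}(Y_i)}$, where $\tau(i)$ is the community of the $i$-th queried group, so that $I_t(M)=I_0(M)+\sum_{i\le t}X_i$; by the Bayes identity defining $P_{E_0|E_s}$ in \eqref{eq:distA}, the product model gives $\mathbb{E}[X_i\mid\mathcal{F}_{i-1}]=I_{\tau(i)}(E_0;E_s)$, so $M_t\triangleq I_t(M)-I_0(M)-\sum_{i\le t}I_{\tau(i)}(E_0;E_s)$ is a martingale with increments bounded by $i_{\max}$. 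Because the queries sweep the communities in decreasing popularity order, after $t=\sum_{\tau\le\tau'}|\mathcal{C}_\tau|+i'$ queries the compensator equals $\sum_{\tau\le\tau'}|\mathcal{C}_\tau|I_\tau(E_0;E_s)+i'I_{\tau'+1}(E_0;E_s)$, and $(\tau^*,i^*)$ is by definition the first index at which this deterministic quantity reaches $\psi=H(M)+\log\frac1\epsilon+i_{\max}$. Applying optional stopping to $M_t$ at $T_M\triangleq\min\{t:I_t(M)>\log\frac1\epsilon\}$, using $\mathbb{E}[I_0(M)]=\mathbb{E}[\log P_M(M)]=-H(M)$ and the overshoot bound $I_{T_M}(M)\le\log\frac1\epsilon+i_{\max}$, gives $\mathbb{E}\big[\sum_{i\le T_M}I_{\tau(i)}(E_0;E_s)\big]\le\psi$; comparing this accumulated expected information community by community with the definition of $(\tau^*,i^*)$ yields $\mathbb{E}[T_M]\le\sum_{\tau\le\tau^*}|\mathcal{C}_\tau|+i^*$ up to lower-order terms, and since the attack terminates no later than $T_M$ on the (probability $1-o(1)$) event that no spurious user ever crosses the threshold, $\overline{Q}_{ITS}$ obeys the same bound.

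The last step is the error probability, via the standard change-of-measure argument. Fix $k\ne M$ and let $X_i^{(k)}\triangleq\log\frac{P_{E_0|E_s}(Y_i\mid F_{k,j_i})}{P^{\tau(i)}_{E_0}(Y_i)}$, so $e^{I_t(k)}=P_M(k)\prod_{i\le t}e^{X_i^{(k)}}$. In the product model $F_{k,j_i}$ is independent of $Y^t$ (which is a function of $M$'s fingerprint only), so $\mathbb{E}\big[e^{X_i^{(k)}}\mid\mathcal{F}_{i-1}\big]=\sum_f P^{\tau(i)}_{E_s}(f)\,\frac{P_{E_0|E_s}(Y_i\mid f)}{P^{\tau(i)}_{E_0}(Y_i)}=1$ by marginalization; thus $e^{I_t(k)}$ is a non-negative supermartingale starting at $P_M(k)$, and Doob's maximal inequality gives $P\big(\exists\,t\le n:\ I_t(k)>\log\frac1\epsilon\big)\le\epsilon\,P_M(k)(1+o(1))$. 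A union bound over $k\ne M$ with $\sum_k P_M(k)=1$, together with the concentration fact that $I_{t^*}(M)>\log\frac1\epsilon$ with probability $1-o(1)$ (so the true victim is eventually isolated), gives $P_{e,ITS}\le\epsilon(1+o(1))$, which is absorbed into the stated bound. The main obstacle is the interaction of the two non-standard features: one must use Proposition~\ref{prop:4} to control the fingerprint correlations uniformly enough that the martingale identities hold with only $o(1)$ slack, while simultaneously handling the non-identically-distributed increments $X_i$ and translating the Wald-type bound on $\sum_{i\le T_M}I_{\tau(i)}(E_0;E_s)$ into the piecewise quantity $\sum_{\tau\le\tau^*}|\mathcal{C}_\tau|+i^*$.
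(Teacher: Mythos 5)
Your proposal is correct and follows essentially the same route as the paper, whose own proof is only a two-line sketch: it invokes the argument of Theorem \ref{th:1} with the Wald identity replaced by the community-wise accumulated information $\sum_{\tau\leq\tau'}|\mathcal{C}_\tau|I_\tau(E_0;E_s)+i'I_{\tau'+1}(E_0;E_s)$ and with Proposition \ref{prop:4} in place of Proposition \ref{prop:3} (which is exactly why $c'$ disappears). Your martingale/optional-stopping formulation of the generalized Wald identity and the supermartingale/Doob bound for the error probability are standard equivalents of the paper's Wald's-identity and change-of-measure steps, and you gloss the conversion of $\mathbb{E}\bigl[\sum_{i\leq T}I_{\tau(i)}\bigr]\leq\psi$ into $\mathbb{E}[T]\leq\sum_{\tau\leq\tau^*}|\mathcal{C}_\tau|+i^*$ at the same level of rigor as the paper does.
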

The proof for the upper bound on $\overline{Q}_{ITS}$ follows similar arguments as Theorem \ref{th:1}, and uses \eqref{eq:th1:2} along with the fact that
 \begin{align*}
    \mathbb{E}(I_{T_{n'}(M)})
    =  \sum_{\tau\leq \tau'}|\mathcal{C}_\tau|I_{\tau}(E_0;E_s)+
    i'I_{\tau'+1}(E_0;E_s),
\end{align*}
where, $\mathbb{E}(T_{n'})=\sum_{\tau\leq \tau'} |\mathcal{C}_\tau|+i', i'\leq |\mathcal{C}|_{\tau'+1}$. 
 The derivation of the upper bound on the probability of error follows similar arguments as in the proof of Theorem \ref{th:1} along with Proposition \ref{prop:4}.

\subsection{Noisy Scan and Query Responses}
In this section, we extend the ITS to the case of noisy scan and query responses and arbitrary finite sets $\Gamma$ and $\Theta$. Note that as explained in Section \ref{sec:form}, the attacker does not have access to $\gamma(k), k\in [m]$ but has access to $\theta(M)$. To elaborate, the attacker knows the user device specifications, and hence it knows the query response noise statistics $P^{\theta(M)}_{Y|E_0}$, but does not know the users' privacy preferences, and hence it only knows that the scan knows statistics is given by one of the conditional distributions $P^{\gamma}_{E_s|E_0}, \gamma\in \Gamma$, where $\gamma$ may be different for different users. 

Let us focus on the $\alpha$-PA model for $\alpha\in (0,1]$. The query function is defined as in the previous section. The identification function is modified as follows. The attacker has access to $\theta(M)$ since it can query the victim's hardware and software specifications. So, it can find $P^{\theta(M)}_{Y|E_0}$ and use it in calculating the users' information values as in the previous scenario. As for the scan noise parameter, $\gamma$, the attacker computes $|\Gamma|$ different information values for each user, one for each value of $\gamma\in \Gamma$, and assigns the maximum resulting value as the information value of the user. This resembles the communication strategies used for communicating over compound channels when channel state information is unavailable \cite{boche2020}. So,

\begin{align*}
    &I_0(k)= \log{P_M(k)}, k\in [m],\\
    &I_t(k)=\max_{\gamma\in \Gamma}
\sum_{i=1}^t    \log{\frac{P^{\gamma,\theta(M)}_{Y|E_s}(y_i|f_{k,i})}{P^{\gamma}_{Y}({y_i})}}+I_0(k), k\in [m], t\in [n],
\end{align*}
where $P^{\gamma}_Y(\cdot)=\sum_{s\in \{0,1\}}P_{E_0}(s)P^{\gamma}_{Y|E_0}(\cdot|s)$, and $P^{\gamma,\theta(M)}_{Y|E_s}(y_i|f_{k,i})= \sum_{s\in \{0,1\}} P^{\theta(M)}_{E_0|E_s}(s|f_{k,i}) P^{\gamma}_{Y|E_0}(y|s), y,f_{k,i}\in \{0,1\}$, and:

\begin{align}
    P^{\Theta(M)}_{E_0|E_s}(s|f)\triangleq  \frac{P_{E_0}(s)P^{\Theta(M)}_{E_s|E_0}(f|s)}{\sum_{s'\in \{0,1\} }  P_{E_0}(s')P^{\Theta(M)}_{E_s|E_0}(f|s') }.
\end{align}

The following sufficient conditions for successful deanonymization are given in the following theorem. 

\begin{Theorem}
Consider the ITS described above with parameter $\epsilon>0$. Let $\overline{Q}_{ITS}$ be the resulting expected number of queries and $P_{e,ITS}$ the resulting probability of error. Then, in the $\alpha$-PA scenario with $\alpha\in (0,1]$:
    \begin{align*}
        &\overline{Q}_{ITS}\leq \sum_{\gamma\in \Gamma,\theta\in \Theta} P_{\Gamma,\Theta}(\gamma,\theta) \frac{H(M)+\log{\frac{1}{\epsilon}}+ i_{\max}}{c'I_{\gamma,\theta}(Y;E_s)},\\
        &P_{e,ITS}\leq \frac{|\Gamma|\epsilon}{c'},
    \end{align*}
    where $c'$ is from Proposition \ref{prop:3}, the mutual information is evaluated with respect to $P_{E_0,E_s}= P_{E_0}P_{E_s|E_0}$,  the distribution $P_{E_s|E_0}$ is given in \eqref{eq:distA}, the variable $E_0$ is Bernoulli with $P_{E_0}(1)= 1-P_{E_0}(0)= \frac{\mu}{m}$, $i_{max}\triangleq \max_{y,f\in \{0,1\}}\log{\frac{P_{E_0|E_s}(y|f)}{P_{E_0}(y)}}$, and $P_{\Gamma,\Theta}(\gamma,\theta)\triangleq \frac{|\{u_k| \theta(k)=\theta, \gamma(k)=\gamma\}|}{m}, \theta\in \Theta, \gamma\in \Gamma$.
\label{th:3}
\end{Theorem}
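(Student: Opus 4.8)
\textbf{Proof proposal for Theorem \ref{th:3}.}

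The plan is to reduce the noisy-scan/noisy-query case to the structure already established in Theorem \ref{th:1} by conditioning on the pair of noise parameters $(\gamma(M), \theta(M))$ of the victim. First I would observe that the attacker knows $\theta(M)$ but not $\gamma(M)$; the definition of $I_t(k)$ takes a maximum over $\gamma \in \Gamma$, so the true value $\gamma(M)$ is always among the candidates. Condition on the event that the victim has noise parameters $(\gamma, \theta)$, which happens with probability $P_{\Gamma,\Theta}(\gamma,\theta)$. On this event, the effective observation channel from the scanned fingerprint bit $F_{M,j}$ to the query response $Y_j$ is the Markov chain $F_{M,j} \leftrightarrow E_0 \leftrightarrow Y_j$ with the two stages governed by $P^{\theta}_{E_0|E_s}$ and $P^{\gamma}_{Y|E_0}$; the single-letter mutual information of this compound channel is $I_{\gamma,\theta}(Y;E_s)$ as defined in the statement. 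Using the term of the sum in $I_t(M)$ corresponding to the \emph{correct} index $\gamma$ as a lower bound for the max, the victim's information value drifts upward at rate $I_{\gamma,\theta}(Y;E_s)$ per query, exactly mirroring the role of $I(E_0;E_s)$ in Theorem \ref{th:1}. Invoking the almost-memoryless property from Proposition \ref{prop:3} (which introduces the constant $c'$ and the $(1+o(1))$ factors), a martingale/stopping-time argument identical to that of Theorem \ref{th:1} gives a conditional expected query count bounded by $\frac{H(M)+\log\frac{1}{\epsilon}+i_{\max}}{c' I_{\gamma,\theta}(Y;E_s)}$; averaging over $(\gamma,\theta)$ with weights $P_{\Gamma,\Theta}(\gamma,\theta)$ yields the claimed bound on $\overline{Q}_{ITS}$.

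For the probability of error, I would bound the chance that some \emph{wrong} user $u_k$, $k \neq M$, reaches information value exceeding $\log\frac{1}{\epsilon}$. The subtlety relative to Theorem \ref{th:1} is the maximum over $\gamma \in \Gamma$ in the definition of $I_t(k)$: a wrong user could cross the threshold under any one of the $|\Gamma|$ candidate distributions. I would therefore apply a union bound over $\gamma \in \Gamma$ first. For each fixed $\gamma$, the standard change-of-measure (Chernoff/Markov) argument — identical to the one in the proof of Theorem \ref{th:1}, again paying the factor $\frac{1}{c'}$ coming from Proposition \ref{prop:3} to pass from the true correlated fingerprint distribution to the product distribution — shows the probability that a given wrong user exceeds the threshold under that $\gamma$ is at most $\epsilon/c'$ summed over users in the appropriate sense, so that overall $P_{e,ITS} \le \frac{|\Gamma|\epsilon}{c'}$. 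The $|\Gamma|$ factor is exactly the price of the union bound over the unknown scan parameter.

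The main obstacle I anticipate is making the compound-channel step fully rigorous: specifically, showing that taking the maximum over $\gamma$ in the wrong user's information value does not spoil the negative drift for wrong users, while the correct-$\gamma$ term still guarantees positive drift for the true victim. For the victim this is immediate (the max dominates the correct term). For a wrong user $u_k$, one must check that for \emph{each} $\gamma$, the increment $\log\frac{P^{\gamma,\theta(M)}_{Y|E_s}(Y_i|F_{k,i})}{P^{\gamma}_Y(Y_i)}$ has non-positive expectation under the true joint law of $(F_{k,i}, Y_i)$ — which holds because $F_{k,i}$ is (almost) independent of the victim's response $Y_i$ by Proposition \ref{prop:3}, so the increment is a log-likelihood ratio between the product law and itself shifted, whose expectation is $-D(\cdot\|\cdot) \le 0$. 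Combining this per-$\gamma$ drift control with the union bound over $\Gamma$ is the crux; once that is in place, the rest is a mechanical repetition of the Burnashev-style stopping-time analysis already carried out for Theorem \ref{th:1} in Appendix \ref{app:th:1}.
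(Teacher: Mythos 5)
Your proposal is correct and follows essentially the same route as the paper: condition on the victim's noise pair $(\gamma,\theta)$, note that the maximum over $\gamma$ can only increase the victim's information value (so the per-$(\gamma,\theta)$ query bound from the Theorem \ref{th:1} stopping-time argument carries over and is then averaged with weights $P_{\Gamma,\Theta}$), and apply a union bound over $\Gamma$ in the change-of-measure error analysis to pick up the factor $|\Gamma|$. The paper gives only this sketch, and your elaboration (including the observation that the correct-$\gamma$ term lower-bounds the max) fills it in consistently.
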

The proof follows similar argument to that of Theorem \ref{th:1}. The derivation of the bound on $\overline{Q}_{ITS}$ for a given choice of $\theta\in \Theta,\gamma \in \Gamma$ is unchanged since the number of queries needed to achieve the desired information value threshold does not increase with the modified information values, since users are assigned a higher information value by maximizing over $\gamma\in \Gamma$.
The bound on the probability of error follows the exact same steps as in the proof of Theorem \ref{th:1} with the additional step of using the union bound to bound the probability of error over the union of choices of $\Gamma$.

\begin{Remark}
Similar to the derivation of Theorem \ref{th:3} which extends Theorem \ref{th:1} to general scan and query noise statistics, Theorem \ref{th:2} can also be extended to derive sufficient conditions for the success of ITS under the SB model and general noise statistics. Again, the bound on the expected number of queries $\overline{Q}$ remains the same, but the upper-bound on the probability of error $P_e$ grows linearly in $|{\Gamma}|$. 
\end{Remark}
\begin{figure}[t]
 \centering \includegraphics[width=0.6\linewidth,height=2.3in, draft=false]{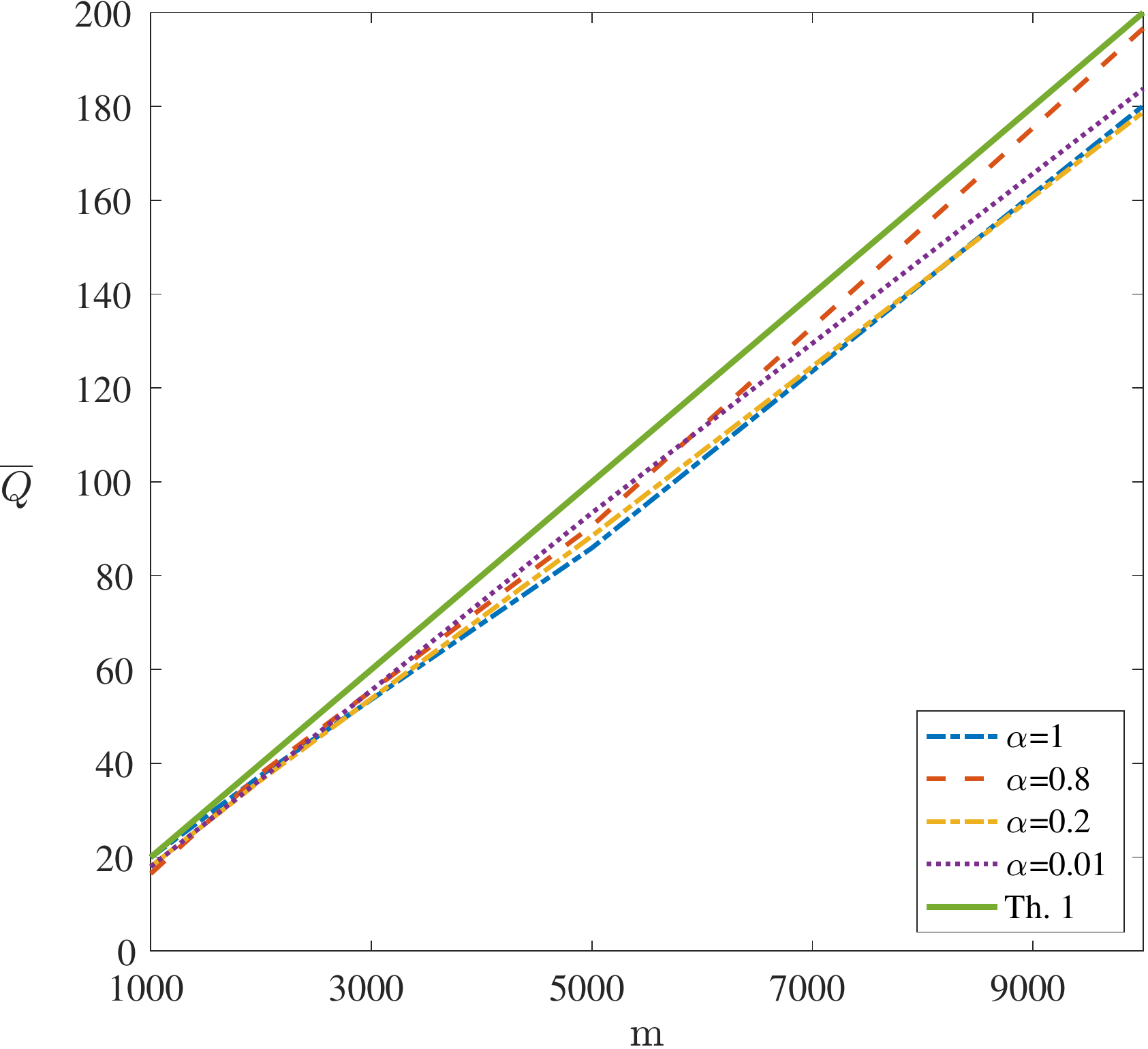}
  \caption{Expected number of queries necessary for success in the ITS under noiseless scanned and query response assumptions with a victim which is uniformly chosen among the users. The green (filled) line is the upper bound on the expected number of queries due to Theorem \ref{th:1}.}
 \label{fig:sim:1}
\end{figure}

\section{Simulation Results}
\label{sec:simul}
In this section, we provide several simulations of synthesized and real-world ABND attacks to verify the theoretical results presented in the previous sections and gain further intuition regarding the users' privacy risks under such attack scenarios. 

\subsection{Effect of Growth Parameter $\alpha$ on $\overline{Q}_{ITS}$}
As a first step, we consider a noiseless ABND scenario under the $\alpha$-PA generation model, where the scanned graph and the query responses are acquired noiselessly by the attacker, and the victim is equally likely to be any of the users. We wish to evaluate the effect of changing the preferential attachment parameter $\alpha\in (0,1]$ on the expected number of queries necessary for deanonymization under ITS. Note that in this case, ITS reduces to a simple strategy, where queries are made until the acquire responses have a unique match among the user fingerprints in the scanned graph since $P_{Y|E_s}(y|f)= \mathbbm{1}(y=f), y,f\in \{0,1\}$. Our analysis in Theorem \ref{th:1} predicts that $\overline{Q}_{ITS}$ grows linearly in $m\in \mathbb{N}$ since in the denominator in \eqref{eq:th:1} we have $I(E_0;E_S)= H(E_0)= \frac{m}{\mu}(\log{m}+o(\log{m}))$, and in the numerator we have $H(M)=\log{m}$ and $i_{max}=\log{m}$. In fact, Theorem \ref{th:1} predicts that $\overline{Q}\triangleq \overline{Q}_{ITS}-\frac{\log{\frac{1}{e}}}{\log{m}}\approx \frac{2m}{\mu}$, and does not depend on the value of $\alpha$. This is verified by our simulations shown in Figure \ref{fig:sim:1}, where we have simulated the attack with parameters $\mu=100$, $\epsilon=0.01$, and $\beta=0.1$. For each value $m= \{1000,2000,5000,10000\}$, we have simulated the attack $500$ times, by generating the ground-truth five times and choosing a victim randomly and uniformly for each generation $100$ times.

\subsection{Effect of Query Response Noise on $Q_{ITS}$}
Let us recall that the set $\Theta$ captures the diversity in query noise statistics due to the various hardware and software specifications of the users and the different browser sniffing techniques available to the attacker, where the resulting query response noise is captured by $P^{\theta(M)}_{Y|E_0}(\cdot|\cdot)$ and $M$ is the victim's index. 
Now, we investigate the effect of diversity of query response noise on the expected number of queries for successful deanonymization with ITS. To elaborate, we consider a noiseless scanned graph but noisy query responses. To model the query noise diversity, we consider two initial noise statistics $P_{Y|E_0}$ and $P'_{Y|E_0}$, where $P_{Y|E_0}$ is the transition probability of a binary symmetric channel with parameter $0.01$, and $P'_{Y|E_0}$ is the transition probability of a binary symmetric channel with parameter $0.3$. These statistics are chosen to be within the range of empirical observations of noise in browser history sniffing (e.g. \cite{gulmezoglu2017perfweb, smith2018browser,shusterman2019robust}). We consider 5 scenarios, where $\Theta_k=\{0,1,2,\cdots, 2^k-1\} ,k= [5]$, and define $P^{\theta}_{Y|E_0}= \frac{\theta}{2^k-1}P_{Y|E_0}+\frac{2^k-1-\theta}{2^k-1}P'_{Y|E_0},\theta\in \Theta_k$. Figure \ref{fig:sim:2} shows the resulting expected number of queries as a function of $m$, where we have simulated the attack with parameters $\alpha=1$, $\mu=100$, $\epsilon=0.1$, and $\beta=0.4$. For each value $m= \{1000,2000,5000,10000\}$, we have simulated the attack $500$ times, by generating the ground-truth five times and choosing a victim randomly and uniformly for each generation $100$ times.  It can be observed that increasing users' query noise diversity does not have a significant effect on the probability of success. 
\begin{figure}
 \centering \includegraphics[width=0.6\linewidth,height=2.5in, draft=false]{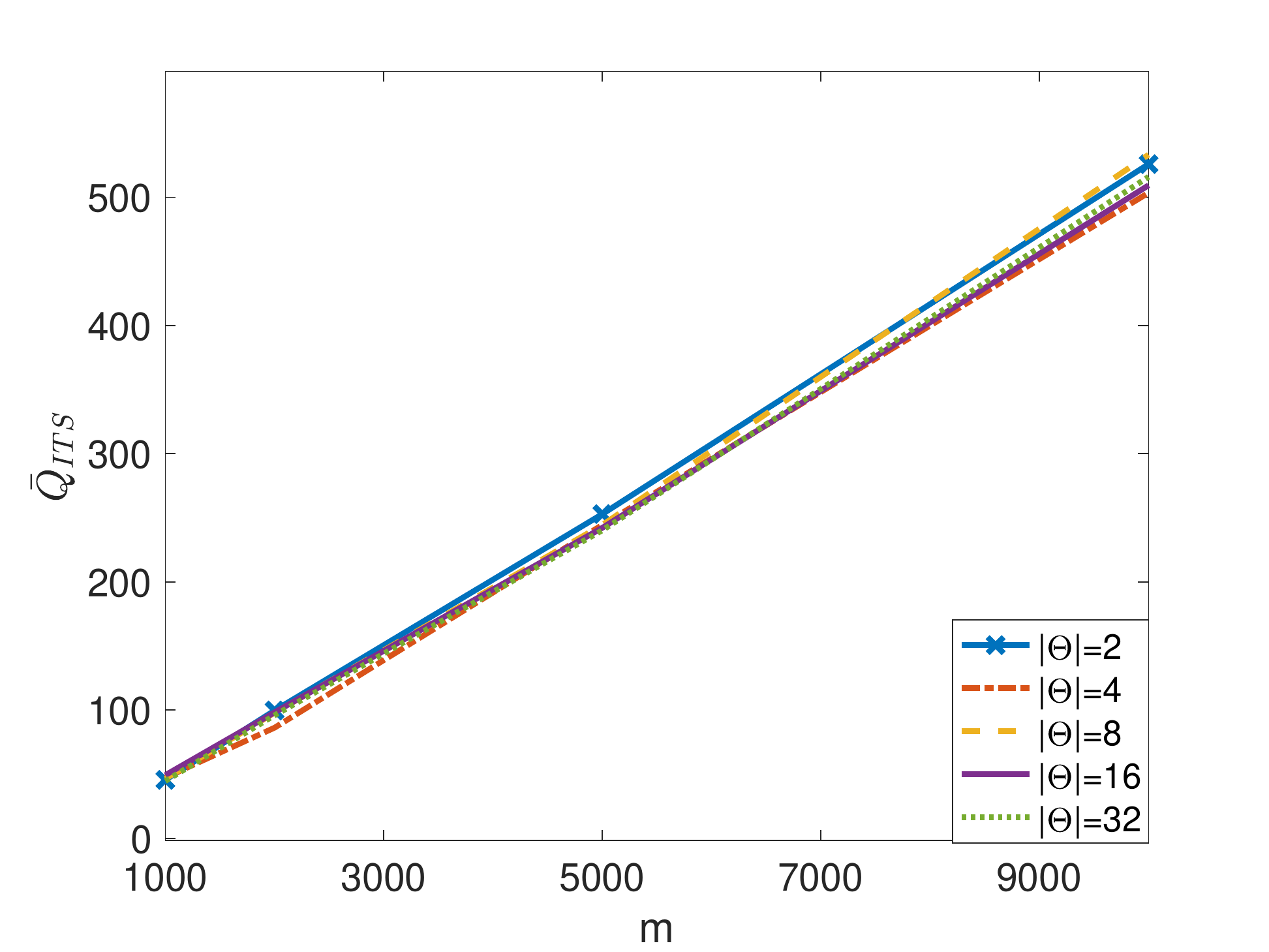}
  \caption{Expected number of queries $\overline{Q}$ necessary for success in the ITS under noiseless scan and noisy query response assumptions with a victim which is uniformly chosen among the users.}
 \label{fig:sim:2}
\end{figure}
\subsection{Effect of Scanned Noise on $\overline{Q}_{ITS}$}
In this section, we consider noiseless query responses, but noisy scanned graph and investigate the effects on the success of the ITS. As predicted by the theoretical results in Theorem \ref{th:3}, the diversity in scanned graph noise does not affect the expected number of queries. However, the upper-bound on the probability of error in Theorem \ref{th:3} changes linearly in $|\Gamma|$.  We have plotted the resulting probability of error in Figure \ref{fig:sim:3}, where
in order to model the scan noise diversity, we have considered two initial noise statistics $P_{E_s|E_0}$ and $P'_{E_s|E_0}$, where $P_{E_s|E_0}$ is the transition probability of a binary symmetric channel with parameter $0.01$ and $P'_{E_s|E_0}$ is the transition probability of a binary symmetric channel with parameter $0.3$.  We have considered five scenarios, where $k\triangleq|\Gamma|= \{2,4,8,16,32\}$, and defined $P^{\gamma}_{E_s|E_0}= \frac{\gamma-1}{k-1}P_{E_s|E_0}+\frac{k-\gamma}{k-1}P_{Y|E_0},\gamma\in [|\Gamma|]$. Figure \ref{fig:sim:3} shows the resulting probability of success ($1-P_e$) as a function of $m$, where we have simulated the attack with parameters $\alpha=1$, $\mu=100$, $\epsilon=0.1$, and $\beta=0.4$. For each value $m= \{1000,2000,5000,10000\}$, we have simulated the attack $500$ times, by generating the ground-truth five times and choosing a victim randomly and uniformly for each generation $100$ times. It can be observed that increasing the users' privacy preference options ($|\Gamma|$) does not have a significant effect on the resulting probability of success for ITS. This suggests that the upper-bound on the probability of error in Theorem \ref{th:3} can be potentially improved. 
\begin{figure}[t]
 \centering \includegraphics[width=0.6\linewidth,height=2.5in, draft=false]{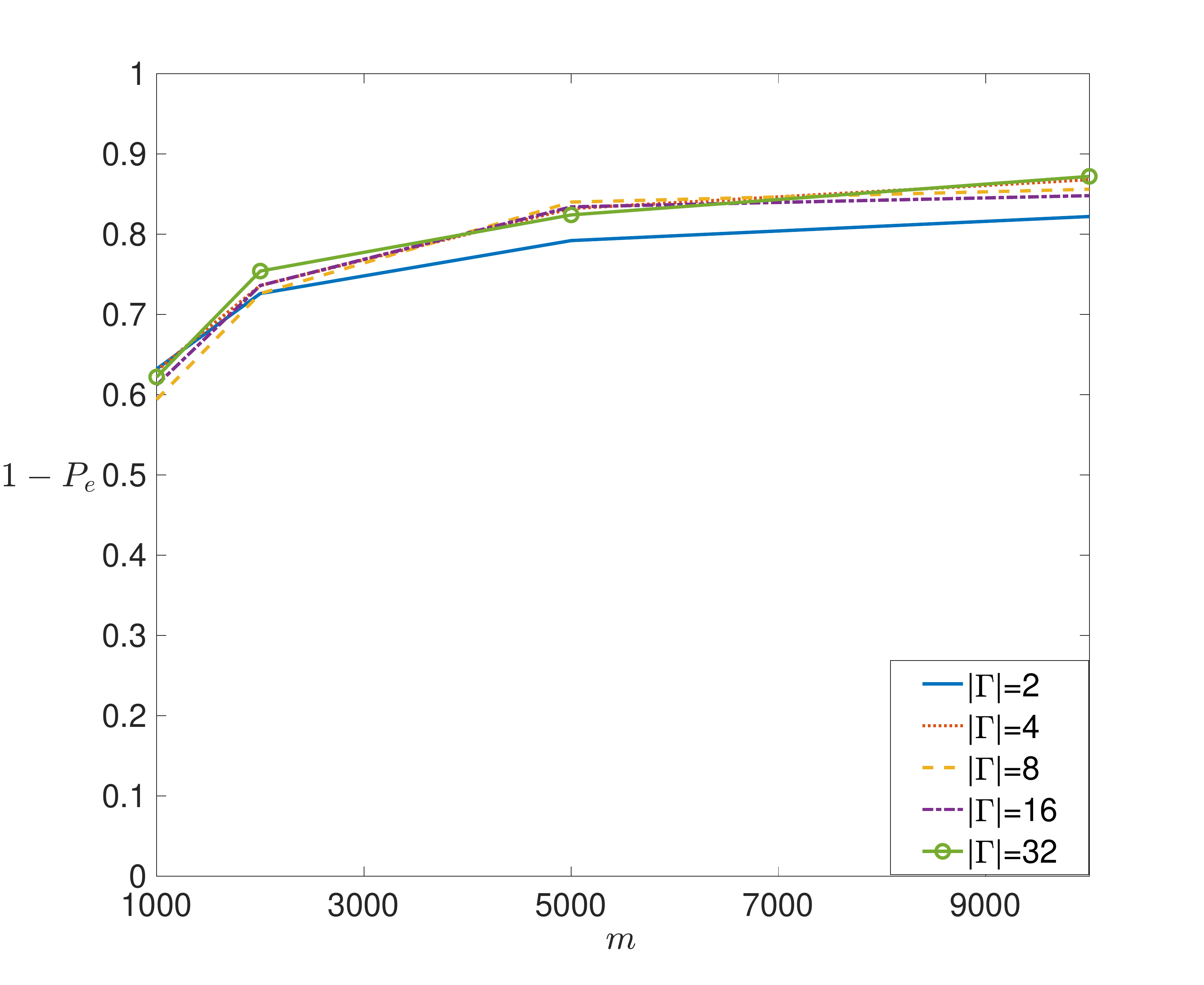}
  \caption{Probability of success in the ITS under varying scanned noise statistics and noiseless query response assumptions with a victim which is uniformly chosen among the users.}
 \label{fig:sim:3}
\end{figure}

\subsection{Performance in Real-world Networks}
In this section, we simulate an active attack on the LiveJournal network, which is a free on-line blogging community which allows users to form a group which other members can then join \cite{yang2015defining}. The database\footnote{The database is available at \href{https://snap.stanford.edu/data/com-LiveJournal.html}{https://snap.stanford.edu/data/com-LiveJournal.html}.} consists of $3,997,962$ members and $664,414$ groups. We have extracted a subset of $1517$ groups with at least $400$ members, and selected a subset of $49,164$ users which are  members of at least $4$ of these groups. The simulation is run $100$ times, where each time a victim is chosen randomly and uniformly among the users.  In Figure \ref{fig:sim:4}, we have simulated the attack in $10$ scenarios, where we have modeled both the scanning and query noise with binary erasure channels with erasure probability ranging from $ 0.01$ to $0.1$. We have used $\epsilon=0.1$ for the ITS error parameter. It can be seen that for the larger values of the erasure probability, the $1,517$ groups scanned by the attacker are not sufficient to identify the victim and the attacker must scan and query additional group memberships, whereas for smaller erasure probability, the attacker succeeds with probability close to one.

\begin{figure}[t]
 \centering 
\subfigure[]{\includegraphics[width=0.45\linewidth,height=2in, draft=false]{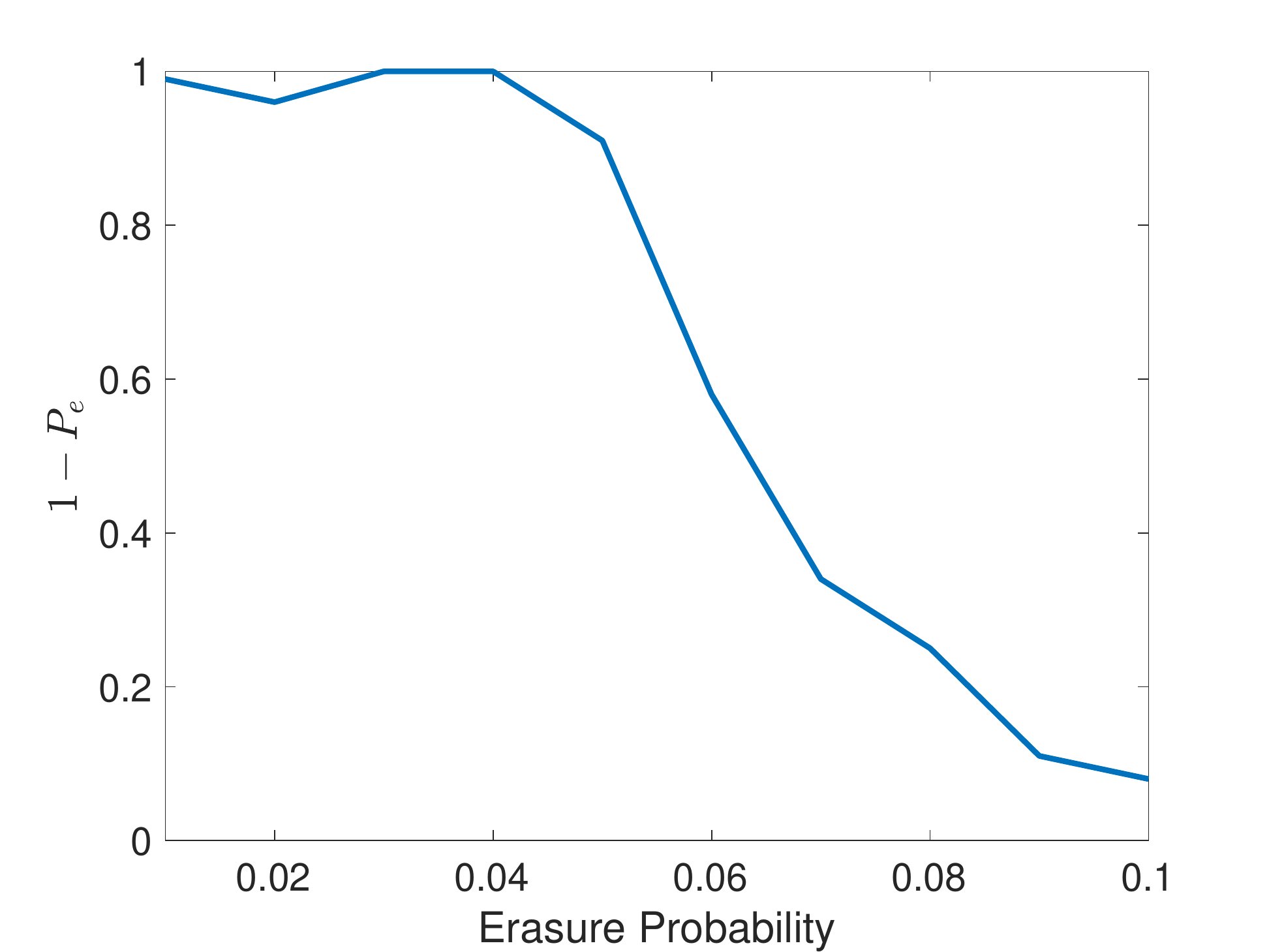}}
\subfigure[]{\includegraphics[width=0.45\linewidth,height=2in, draft=false]{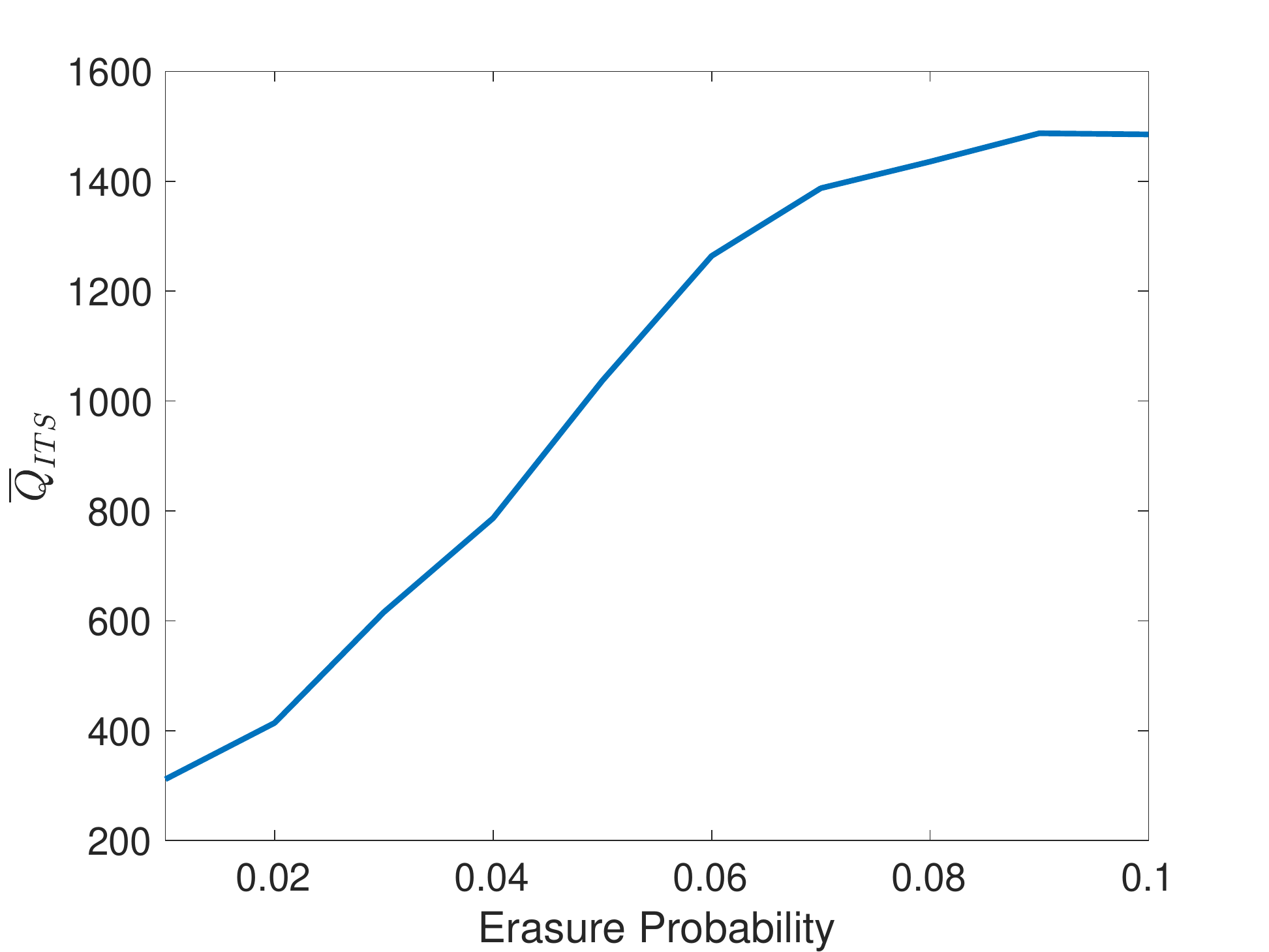}}
 \caption{ Probability of success (a) and average number of queries (b) for ITS in deanonymizing users in the LiveJournal network.}

 \label{fig:sim:4}
\end{figure}

\section{Conclusions and Future Work}
\label{sec:conc}
We have studied the ABND problem for general non-equiprobable user indices under various ground truth generation models such as linear and sublinear preferential attachment and stochastic block model. We have studied the ITS deanonymization strategy which operates based on information thresholds. The strategy measures the amount of uncertainty in the user indices given the received query responses. We have characterized the performance of the ITS both for social networks with a fixed, finite number of users as well as for asymptotically large social networks. We have provided simulations of the attack both in synthesized as well as real-world bipartite networks to verify the theoretical results. Future research directions include i) extending the theoretical results to scenarios where the scan and query noise models allow for correlated noise, ii) exploring the model assumptions such as sparsity in real-world bipartite networks other than social networks such as wireless mobility, and medical databases, and iii)
evaluating the performance of the proposed algorithms in such real-world bipartite networks.  

\begin{appendices}
\section{Proof of Proposition \ref{prop:1}}
\label{app:prop:1}
 To prove \eqref{eq:prop1:1}, note that, by symmetry, $\mathbb{E}(D_{\Delta})= \mathbb{E}(D_{\Delta,1})=\mathbb{E}(D_{\Delta,2})=\dots = \mathbb{E}(D_{\Delta,n})$. As a result, $\Delta= \sum_{j=1}^{n}\mathbb{E}(D_{\Delta,j})= n \mathbb{E}(D_{\Delta})$ and $\mathbb{E}(D_{\Delta})=\frac{\Delta}{n}=\mu$. 
 
 Next, we prove \eqref{eq:prop1:2}. Note that the group sizes $D_{t,j}, t\in [\Delta], j\in [n]$ are identically distributed since  in the $\alpha$-PA model the initial group popularities are assumed to be equal. More precisely, $P_{D_{t,j}}(d)= P_{D_{t,j'}}(d),d\in \{0,1,\cdots, m\}, j,j'\in [n], t\in [\Delta]$. As a result, we focus on $\mathbb{E}(D_t^2)$, where $D_t=D_{t,1}$. Let $N_t(d), t\in [\Delta], d\in \{0,1,\cdots, m\}$ be the number of groups with size equal to $d\in \{0,1,\cdots, m\}$ in step $t\in [\Delta]$ of the generation process. Note that
 \begin{align}
    \nonumber & \mathbb{E}(N_t(d))=\mathbb{E}(\sum_{j=1}^{n}\mathbbm{1}(D_{t,j}=d))
     = \sum_{j=1}^{n} P_{D_{t,j}}(d)\stackrel{(a)}{=}n P_{D_t}(d)
     \\&\label{eq:P_D}
     \Rightarrow P_{D_t}(d)= \frac{\mathbb{E}(N_t(d))}{n},
 \end{align}
 where in (a) we have used $P_{D_{t,j}}(d)= P_{D_t}(d), j\in [n], t\in [\Delta], d\in \{0,1,\cdots, m\}$.
 
 Also, we can write the conventional \emph{master equations} in growing networks \cite{dorogovtsev2000structure} using the law of total expectation: 
 \begin{align}
     &\mathbb{E}(N_t(0))= (1-p_{t-1}(0))  \mathbb{E}(N_{t-1}(0))\label{eq:master1}
     \\
     &\mathbb{E}(N_t(d))= (1-p_{t-1}(d))  \mathbb{E}(N_{t-1}(d))+\label{eq:master2}
     \\& p_{t-1}(d-1)\mathbb{E}(N_{t-1}(d-1)), 1<d\leq m \nonumber
 \end{align}
 where $p_t(d), t\in [\Delta], d\in \{0,1,\cdots, n\}$ is the probability that a given group $r_j$ with size $d$ is chosen in step $t$ of the generation process. By construction, we have $p_{t}(d) = \frac{d^{\alpha}+1}{\sum_{j=1}^n D^{\alpha}_{t,j}+n}\leq \frac{d^{\alpha}+1}{n}, t\in [\Delta], d\in \{0,1,\cdots, n-1\}$ and $p_t(m)=0$. 
 Note that:
 \begin{align*}
    & \mathbb{E}(D_t^2)=\sum_{d=0}^m d^2P_{D_t}(d)
\stackrel{(a)}{=} \sum_{d=1}^m d^2\frac{\mathbb{E}(N_t(d))}{n}
 \stackrel{(b)}{=}
    \sum_{d=1}^{m} \Bigg(\frac{d^2}{n}\bigg(\big(1-p_{t-1}(d)\big)  \mathbb{E}\big(N_{t-1}(d)\big)+
 p_{t-1}(d-1)\mathbb{E}\big(N_{t-1}(d-1)\big)\bigg )\Bigg)
\\
&   =\sum_{d=1}^{m} d^2\frac{\mathbb{E}\big(N_{t-1}(d))}{n}
     -
      \sum_{d=1}^{m} d^2\frac{\mathbb{E} (N_{t-1}(d))}{n}p_{t-1}(d) 
      +
\sum_{d=1}^{m}d^2 \frac{\mathbb{E}N_{t-1}(d-1)}{n}p_{t-1}(d-1)
\\
      &   =\sum_{d=1}^{m} d^2P_{D_{t-1}}(d)
     -
      \sum_{d=1}^{m} d^2P_{D_{t-1}}(d)p_{t-1}(d)
      +
 \sum_{d=1}^{m}d^2 P_{D_{t-1}}(d-1)p_{t-1}(d-1)
\\     &
      =\sum_{d=1}^{m} d^2P_{D_{t-1}}(d)
     -
      \sum_{d=1}^{m} d^2P_{D_{t-1}}(d)p_{t-1}(d)
      +
\sum_{d=0}^{m-1}(d+1)^2 P_{D_{t-1}}(d)p_{t-1}(d)
     \\     &
      =\sum_{d=1}^{m} d^2P_{D_{t-1}}(d)
     +
      \sum_{d=1}^{m} ((d+1)^2-d^2)P_{D_{t-1}}(d)p_{t-1}(d)
      -
      m^2P_{D_{t-1}}(m)p_{t-1}(m)
      \end{align*}
     \begin{align*}
     &\leq 
     \mathbb{E}(D^2_{t-1})
      +\sum_{d=0}^{m-1} (2d+1)P_{D_{t-1}}(d)p_{t-1}(d)
      \stackrel{(c)}{\leq} 
           \mathbb{E}(D^2_{t-1})
      +\sum_{d=0}^{m-1} (2d+1)P_{D_{t-1}}(d)\frac{d^{\alpha}+1}{n}
     \\&=    \mathbb{E}(D^2_{t-1})
      +\sum_{d=0}^{m-1} P_{D_{t-1}}(d)\frac{(2d^{1+\alpha}+d^{\alpha}+2d+1)}{n}
   = \mathbb{E}(D^2_{t-1})
      +\mathbb{E}(\frac{2D^{1+\alpha}_{t-1}+D^{\alpha}_{t-1}+2D_{t-1}+1}{n}) 
      \\& \stackrel{(d)}{\leq} 
      \mathbb{E}(D^2_{t-1})
      +\frac{2\mathbb{E}(D^2_{t-1})+3\mathbb{E}(D_{t-1})+1}{n},
 \end{align*}
 where in (a) we have used \eqref{eq:P_D}, in (b) we have used the master equations \eqref{eq:master1} and \eqref{eq:master2},  in (c) we have used the fact that $p_{t-1}(d) = \frac{d^{\alpha}+1}{\sum_{j=1}^n D^{\alpha}_{t-1,j}+n}\leq \frac{d^{\alpha}+1}{n}$, and in (d) we have used $\alpha\leq 1$.
 So,
 \begin{align}
     \mathbb{E}(D^2_t)\leq \mathbb{E}(D^2_{t-1})
     +\frac{2\mathbb{E}(D^2_{t-1})+3\mathbb{E}(D_{t-1}) +1}{n}, t\in [\Delta].
     \label{eq:fin_E}
 \end{align}
 Note that at the end of step $t=1$, there is exactly one edge in the graph, so $D_1$ is a Bernoulli variable with $P_{D_1}(0)=\frac{n-1}{n}$ and $P_{D_1}(1)=\frac{1}{n}$. So, $\mathbb{E}(D_1)=\mathbb{E}(D^2_1)=\frac{1}{n}$. Also, $D^2_t\geq D_t$ since $D_t\in\{0,1,\cdots,m\}$, so $\mathbb{E}(D^2_t)\geq \mathbb{E}(D_t)=\frac{t}{n}, t\geq 1$. As a result, from \eqref{eq:fin_E}, for any $t\in [\Delta]$ we have:
 \begin{align*}
   &     \mathbb{E}(D^2_t)\leq \mathbb{E}(D^2_{t-1})
     +\frac{5\mathbb{E}(D^2_{t-1})}{n}+\frac{1}{n}
     = \mathbb{E}(D^2_{t-1})(1+\frac{5}{n})+\frac{1}{n}
     \leq 
     ( \mathbb{E}(D^2_{t-2})((1+\frac{5}{n})+\frac{1}{n})(1+\frac{5}{n})+\frac{1}{n}
     \\&= \mathbb{E}(D^2_{t-2})(1+\frac{5}{n})^2+ \frac{1}{n}(1+(1+\frac{1}{n})).
 \end{align*}
 Consequently, as $n\to \infty$, we have:
 \begin{align*}
     &\mathbb{E}(D^2_\Delta)
     \leq (1+\frac{5}{n})^{\Delta-1}\mathbb{E}(D^2_1)+\frac{1}{n}(1+
     \sum_{i=2}^\Delta(1+\frac{5}{n})^{i-1} )
 \leq 
    \frac{e^{5\frac{\Delta}{n}}}{n}+
     \frac{1}{n}+e^{5\frac{\Delta}{n}}
\leq e^{5\mu}+O(\frac{1}{n})
     =O(1),
          \end{align*}
          where we have used $(1+\frac{a}{n})^n\to e^{a}$ as $n\to \infty$. 
Next, we prove \eqref{eq:prop1:3}. Take $i\neq j, i,j \in [n]$. We have:
\begin{align*}
    \mathbb{E}(D_{\Delta,i}D_{\Delta,j})
    &=  \mathbb{E}(D_{\Delta,1}D_{\Delta,2})
   =  \mathbb{E}(D_{\Delta,1}(\frac{1}{n-1}\sum_{j'=2}^{n}D_{\Delta,j'}))
    \\&= \mathbb{E}(D_{\Delta,1}(\frac{1}{n-1}\sum_{j'=1}^{n}D_{\Delta,j'}-D_{\Delta,1}))
= \mathbb{E}(D_{\Delta,1}(\frac{1}{n-1}\Delta-D_{\Delta,1}))
    \\&= \frac{\Delta}{n-1}\mathbb{E}(D_{\Delta,1})-\frac{1}{n-1}\mathbb{E}(D^2_{\Delta,1})
= \frac{\mu n}{n-1}\mathbb{E}(D_{\Delta,1})-\frac{1}{n-1}\mathbb{E}(D^2_{\Delta,1})
= \mu^2+O(\frac{1}{n}),
\end{align*}
where in the last inequality we have used  \eqref{eq:prop1:2}. The proof of \eqref{eq:prop1:2.5} follows by induction on the above argument.

Next, we prove \eqref{eq:prop1:4} using an inductive argument. For the basis of induction note that Similar to the proof of \eqref{eq:prop1:4}:
\begin{align*}
    \mathbb{E}(D^2_{\Delta,1}D_{\Delta,2})
    &= \frac{\Delta}{n-1}\mathbb{E}(D^2_{\Delta,1})-\frac{1}{n-1}\mathbb{E}(D^3_{\Delta,1})
\leq \frac{\Delta-1}{n-1 }\mathbb{E}(D^2_{\Delta,1})
     \leq \frac{\Delta}{n} \mathbb{E}(D^2_{\Delta,1})
    = \mu \mathbb{E}(D^2_{\Delta,1}),
\end{align*}
where we have used $\mu\geq 1$ to conclude that $\Delta\geq n$ and $\frac{\Delta-1}{n-1}\leq \frac{\Delta}{n}=\mu$. 
Furthermore, similar to the proof of \eqref{eq:prop1:4}, we have: 
\begin{align}
    &\label{eq:aux1}\mathbb{E}(D_{\Delta,1}^2D_{\Delta,2}D_{\Delta,3}\cdots D_{\Delta, \zeta})
= \frac{\Delta}{n-\zeta+1}\mathbb{E}(D_{\Delta,1}^2D_{\Delta,2}D_{\Delta,3}\cdots D_{\Delta, \zeta-1})-
    \\& \nonumber\frac{\zeta-2}{n-\zeta+1} \mathbb{E}(D_{\Delta,1}^2D^2_{\Delta,2}D_{\Delta,3}\cdots D_{\Delta, \zeta-1})-
    \nonumber\frac{1}{n-\zeta+1} \mathbb{E}(D_{\Delta,1}^3D_{\Delta,2}D_{\Delta,3}\cdots D_{\Delta, \zeta-1}).
\end{align}
Note that we have:
\begin{align*}
  &\mathbb{E}(D_{\Delta,1}^2D^2_{\Delta,2}D_{\Delta,3}\cdots D_{\Delta, \zeta-1})\geq  \mathbb{E}(D_{\Delta,1}^2D_{\Delta,2}D_{\Delta,3}\cdots D_{\Delta, \zeta-1}).
\end{align*}
Similarly, we have
$\mathbb{E}(D_{\Delta,1}^3D_{\Delta,2}D_{\Delta,3}\cdots D_{\Delta, \zeta-1})\geq  \mathbb{E}(D_{\Delta,1}^2D_{\Delta,2}$ $D_{\Delta,3}\cdots D_{\Delta, \zeta-1})$. So, from Equation \eqref{eq:aux1}:
\begin{align*}
    &\mathbb{E}(D_{\Delta,1}^2D_{\Delta,2}D_{\Delta,3}\cdots D_{\Delta, \zeta})\\
&\leq\frac{\Delta-\zeta+1}{n-\zeta+1}  \mathbb{E}(D_{\Delta,1}^2D_{\Delta,2}D_{\Delta,3}\cdots D_{\Delta, \zeta-1})
     \leq \frac{\Delta}{n}\mathbb{E}(D_{\Delta,1}^2D_{\Delta,2}D_{\Delta,3}\cdots D_{\Delta, \zeta-1})
    \\& = \mu \mathbb{E}(D_{\Delta,1}^2D_{\Delta,2}D_{\Delta,3}\cdots D_{\Delta, \zeta-1})
    \leq \mu^{\zeta-1} \mathbb{E}(D_{\Delta,1}^2),
\end{align*}
where the last inequality holds by induction. 

Lastly, we prove \eqref{eq:prop1:5}. Note that similar to the proof of \eqref{eq:prop1:3}:
\begin{align*}
    &\mathbb{E}(D_{\Delta,1}D_{\Delta,2}D_{\Delta,3}\cdots D_{\Delta, \zeta})= \frac{\Delta}{n-\zeta+1}\mathbb{E}(D_{\Delta,1}D_{\Delta,2}D_{\Delta,3}\cdots D_{\Delta, \zeta-1})- 
    \frac{\zeta-1}{n-\zeta+1}\mathbb{E}(D^2_{\Delta,1}D_{\Delta,2}D_{\Delta,3}\cdots D_{\Delta, \zeta-1})
    \\&\leq  \frac{\Delta}{n-\zeta+1}\mathbb{E}(D_{\Delta,1}D_{\Delta,2}D_{\Delta,3}\cdots D_{\Delta, \zeta-1})- 
    \frac{\zeta-1}{n-\zeta+1}\mathbb{E}(D_{\Delta,1}D_{\Delta,2}D_{\Delta,3}\cdots D_{\Delta, \zeta-1})
    \\&= \frac{\Delta-\zeta+1}{n-\zeta+1}\mathbb{E}(D_{\Delta,1}D_{\Delta,2}D_{\Delta,3}\cdots D_{\Delta, \zeta-1})
  \leq \mu\mathbb{E}(D_{\Delta,1}D_{\Delta,2}D_{\Delta,3}\cdots D_{\Delta, \zeta-1}),
\end{align*}
where in the last inequality we have used $\Delta>n$ and $\frac{\Delta}{n}=\mu$. The rest of the proof follows by induction. 

\section{Proof of Proposition \ref{prop:2}}
\label{app:prop:2}
 Note that $\frac{C_i}{n}= \frac{1}{n}\sum_{j=1}^n \mathbbm{1}(R(i,j))$, where $R_{i,j}=\mathbbm{1}(u_i\in \mathcal{U}_j), i\in [m], j\in [n]$. Also,  for any $\mathcal{A}\subset [n]$ we have:
\begin{align*}
    &\mathbb{E}((R_{i,j})_{j\in \mathcal{A}})= P(R_{i,j}=1, j\in \mathcal{A})
    = \sum_{d^{\mathcal{A}}\in \{0,1,\cdots,m\}^{|\mathcal{A}|}}P_{(D_j)_{j\in \mathcal{A}}}(d^{\mathcal{A}})P(R_{i,j}=1, j\in \mathcal{A}|d^{\mathcal{A}})
\\&\stackrel{(a)}{=} \sum_{d^{\mathcal{A}}\in \{0,1,\cdots,m\}^{|\mathcal{A}|}}P_{(D_j)_{j\in \mathcal{A}}}(d^{\mathcal{A}})\prod_{j\in \mathcal{A}} P(R_{i,j}=1|d_j)
= \sum_{d^{\mathcal{A}}\in \{0,1,\cdots, m\}^{|\mathcal{A}|}}P_{(D_j)_{j\in \mathcal{A}}}(d^{\mathcal{A}})\prod_{j\in \mathcal{A}} \frac{d_j}{m}
{=} \frac{1}{m^{|\mathcal{A}|}}\mathbb{E}(\prod_{j\in \mathcal{A}}D_j)
    \stackrel{(b)}{\leq} \frac{\mu}{m}^{|\mathcal{A}|}.
\end{align*}
where in (a) we have used the fact that given the group sizes, the users' memberships in the groups are independent of each other by construction. The reason is that in the graph generation process, at each step, once a group is chosen, a user is chosen randomly and uniformly and added to that group's members independent of the previous members, and (b) follows from \eqref{eq:prop1:5}.

So, using an extension of Hoeffding's inequality to weakly correlated variables given in Theorem 3 in \cite{impagliazzo2010constructive}, we have:
\begin{align*}
   & P(C_i\geq \ell)\leq c2^{-nD_b(\frac{\mu}{m}(1+\psi)||\frac{\mu}{m})},
\end{align*}
where $\ell= \frac{n}{m}\mu(1+\psi)=\frac{1}{\beta}\mu(1+\epsilon)$ and $\psi\in (0, \frac{m}{\mu}-1)$. To derive \eqref{eq:prop2}, we note that: 
\begin{align*}
   & P(C_i\geq \ell)\leq c2^{-n(\frac{\mu}{m}(1+\psi)\log{(1+\psi)}+O(\frac{1}{n}))}\to 0,\text{ as } n\to \infty.
\end{align*}


\section{Proof of Proposition \ref{prop:3}}
\label{app:prop:3}
We prove the statement by induction. The case of $n'=1$ is trivially true. Let $n'>1$ and assume that the statement is true for all $n''<n'$.
To simplify the notation, and without loss of generality, we assume that $j_k=k, k\in [n']$. 
Note that given the group sizes, the users' memberships in the groups are independent of each other by construction.  As a result, given that  $D_{k}=d_{k}$, each user is a member of $s_{k}$ with probability $\frac{d_{k}}{m}$ independent of all other users. So, we use the law of total probability and condition on the group sizes:
\begin{align*}
     &P((R_{k})_{k\in [n']}=s^{n'})=
     \sum_{d^{n'}\in \{0,1,\cdots, m\}^{n'}} P((D_{k})_{k\in [n']}=d^{n'})  P((R_{k})_{k\in [n']}=s^{n'}|(D_{k})_{k\in [n']}=d^{n'} )
     \end{align*}
     We have:
     \begin{align*}
   &P((R_{k})_{k\in [n']}=s^{n'}|(D_{k})_{k\in [n']}=d^{n'} )=
=\prod_{k\in [n']}P(R_{k}=s_k|D_{k}=d_k)
    = \prod_{k\in [n']}g(\frac{d_{k}}{m},s_i),
\end{align*}
where 
\begin{align*}
    g(\frac{d_{k}}{m},s_k)= 
    \begin{cases}
    \frac{d_{k}}{m}\qquad& \text{ if } s_k=1,
\\    1-\frac{d_{k}}{m}& \text{ if } s_k=0
    \end{cases}.
\end{align*}
So, 
\begin{align*}
     &P((R_{k})_{k\in [n']}=s^{n'})=
     \sum_{d^{n'}\in \{0,1,\cdots,m\}^{n'}} P((D_{k})_{k\in [n']}=d^{n'})   \prod_{k\in [n']}g(\frac{d_{k}}{m},s_k)
     = \mathbb{E}(\prod_{k\in [n']}g(\frac{d_{k}}{m},s_i)).
\end{align*}
Let us assume that $\sum_{k=1}^{n'}s_k= \zeta, \zeta\in [1,n]$. Using symmetry and
without loss of generality, let the first $\zeta$ elements $s_k, k\in [\zeta]$ be equal to $1$ and the rest of are equal to $0$, so that  $s_k=\mathbbm{1}(k\leq \zeta), j \in [n']$. Then, 
\begin{align}
\label{eq:prop3:aux}
     &P((R_{k})_{k\in [n']}=s^{n'})
     = \frac{1}{m^{n'}}\mathbb{E}(\prod_{k\in [\zeta]}D_{k}\prod_{k'\in [\zeta+1,n']}(m-D_{k'})).
\end{align}
We derive lower and upper bounds for the right hand side of the last equality to complete the proof. To derive the upper bound, note that:
\begin{align*}
  & \frac{1}{m^{n'}}\mathbb{E}(\prod_{k\in [\zeta]}D_k\prod_{k'\in [\zeta+1,n']}(m-D_{k'}))\leq
   \frac{1}{m^{n'}}\mathbb{E}(m^{n'-\zeta}\prod_{k\in [\zeta]}D_k)
\leq \frac{m^{n'-\zeta}}{m^{n'}}\mu^{\zeta}= (\frac{\mu}{m})^{\zeta},
\end{align*}
where the last inequality follows from \eqref{eq:prop1:5}. Consequently, 
\begin{align*}
  & \frac{1}{m^{n'}}\mathbb{E}(\prod_{k\in [\zeta]}(m-D_{1})\prod_{k'\in [\zeta+1,n']}D_j)
 \stackrel{(a)}\leq e^{\frac{\mu}{\beta}}(1-\frac{\mu}{m})^{n'-\zeta}(\frac{\mu}{m})^{\zeta},
\end{align*}
where (a) follows from:
\begin{align*}
    e^{-\frac{\mu}{\beta}}\leq (1-\frac{\mu}{m})^{n'-\zeta}\leq 1,
\end{align*}
which holds by taking $n'\to n$ and $n\to \infty$ and noting that $\lim_{n\to \infty} (1-\frac{a}{n})^n\to e^{-a}$ as $n\to \infty$.

To derive a lower bound for 
\eqref{eq:prop3:aux}, note that:
\begin{align*}
    &\frac{1}{m^{n'}}\mathbb{E}(\prod_{k\in [\zeta]}D_{k}\prod_{k'\in [\zeta+1,n']}(m-D_{k'}))=
    \\&
    \frac{1}{m^{n'}}\bigg( m\mathbb{E}(\prod_{k\in [\zeta]}D_{k}\prod_{k'\in [\zeta+1,n'-1]}(m-D_{k'}))-
\mathbb{E}(D_{n'}\prod_{k\in [\zeta]}D_{k}\prod_{k'\in [\zeta+1,n'-1]}(m-D_{k'}))\bigg)
    \\&\stackrel{(a)}{\geq} 
      \frac{1}{m^{n'-1}} \mathbb{E}(\prod_{k\in [\zeta]}D_{k}\prod_{k'\in [\zeta+1,n'-1]}(m-D_{k'}))-
    (\frac{\mu}{m})^{\zeta+1}
    \\&\stackrel{(b)}{\geq} 
    \frac{1}{m^{n'-1}} \mathbb{E}(\prod_{k\in [\zeta]}D_{k}\prod_{k'\in [\zeta+1,n'-1]}(m-D_{k'}))-
   \frac{\mu}{m} \prod_{k=1}^{n'} P_R(s_k),
\end{align*}
 where (a) and (b) follow from similar arguments as the ones in the derivation of the upper bound. Also, by the induction assumption, $\frac{1}{m^{n'-1}} \mathbb{E}(\prod_{k\in [\zeta]}D_{k}\prod_{k'\in [\zeta+1,n'-1]}(m-D_{k'}))\geq (1-\frac{(n'-1)\mu}{m})\prod_{n=1}^{n'-1}P_{R}(s_k)$.  So, 
 \begin{align*}
   & \frac{1}{m^{n'}}\mathbb{E}(\prod_{k\in [\zeta]}D_{k}\prod_{k'\in [\zeta+1,n']}(m-D_{k'}))
   \geq  
    (1-\frac{(n'-1)\mu}{m})\prod_{n=1}^{n'-1}P_{R}(s_k)-  \frac{\mu}{m} \prod_{k=1}^{n'} P_R(s_k)
    \\&
    \geq \prod_{n=1}^{n'}P_{R}(s_k)( (1-\frac{(n'-1)\mu}{m}-  \frac{\mu}{m})
    \geq (1-\frac{n'\mu}{m})\prod_{n=1}^{n'-1}P_{R}(s_k),
 \end{align*}
The upper bound for 
\eqref{eq:prop3:aux} when $n'\geq \frac{\mu}{m}$ and the sparsity condition $\sum_{i=1}^{n'}\mathbbm{1}(s_i=1)\leq C$ follows similar steps as the derivation above and the following argument:
\begin{align*}
  & \frac{1}{m^{n'}}\mathbb{E}(\prod_{k\in [\zeta]}D_k\prod_{k'\in [\zeta+1,n']}(m-D_{k'}))\leq
   \frac{1}{m^{n'}}\mathbb{E}(m^{n'-\zeta}\prod_{k\in [\zeta]}D_k)
\leq \frac{(m-\mu)^{n'-\zeta}}{m^{n'}}\mu^{\zeta}(1+o(1))
   \\&=(1-\frac{\mu}{m})^{n'-\zeta} (\frac{\mu}{m})^{\zeta}(1+o(1)).
\end{align*}

Next, we derive a lower bound for 
\eqref{eq:prop3:aux} when $n'\geq \frac{\mu}{m}$ and under the the sparsity condition $\sum_{i=1}^{n'}\mathbbm{1}(s_i=1)\leq C$ for some constant finite number $C>0$. Note that:

\begin{align}
 \nonumber      &\prod_{k=1}^{n'}P_{R}(s_k)= 
    (1-\frac{\mu}{m})^{n'-\zeta}(\frac{\mu}{m})^{\zeta}
    \\&\stackrel{(a)}{\leq}2e^{-\frac{\mu(n'-\zeta)}{m}}(\frac{\mu}{m})^{\zeta}
    = 2 e^{-\frac{\mu n'}{m}}(\frac{\mu}{m})^{\zeta}(1+o(1))
   \\&\label{eq:up} \stackrel{(b)}{\leq} 2e^{-\frac{ n}{m}}(\frac{\mu}{m})^{\zeta}(1+O(\frac{1}{n}))
   =2e^{-\frac{1 }{\beta}}(\frac{\mu}{m})^{\zeta}(1+o(1))),
\end{align}
where in (a), we have used the fact that $\zeta=o(n)$ and that $n'>\frac{n}{\mu}$ to conclude that the inequality holds as $n\to \infty$, and in (b) we have used $n'\geq \frac{k_1 n}{\mu}$. On the other hand:
\begin{align}
\nonumber    &\frac{1}{m^{n'}}\mathbb{E}(\prod_{k\in [\zeta]}D_{k}\prod_{k'\in [\zeta+1,n']}(m-D_{k'}))
    \stackrel{(a)}\geq 
    \frac{1}{m^{n'}}( \mathbb{E}(\prod_{k\in [\zeta]}D_{k}\prod_{k'\in [\zeta+1,n']}m^{n'-\zeta-\mu}(m-n)^{\mu})
   = \frac{1}{m^{\zeta}} (1-\frac{1}{\beta})^{\mu}
    \mathbb{E}(\prod_{k\in [\zeta]}D_{k})
    \\&\label{eq:low} \stackrel{(b)}{=} 
  (1-\frac{1}{\beta})^{\mu}
    (\frac{\mu}{m})^{\zeta}(1+o(1)),
\end{align}
where in (a) we have used the fact that $\sum_{k'=\zeta+1}^{n'}D_{k'}\leq \Delta$ to conclude that $\prod_{k'\in [\zeta+1,n' ]} (m-D_{k'})\geq m^{n'-\zeta-\mu}(m-n)^{\mu}$, and in (b) we have used \eqref{eq:prop1:2.5}.
Combining \eqref{eq:low} and \eqref{eq:up} completes the proof. 

\section{Proof of Theorem \ref{th:1}}
\label{app:th:1}
 The proof builds upon ideas developed for studying the fundamental limit of communication over feedback channels \cite{ burnashev1976data}.
Note that $I(E_0;E_s)\leq H(E_0)= (\frac{\mu}{m}\log{m})(1+o(\frac{\log{m}}{m}))$ since $P(E_0=1)=\frac{\mu}{m}$. So, the upper-bound on $\overline{Q}$ is greater than $C_3\frac{n}{\mu}$ for some constant $C_3>0$. This allows us to use Proposition \ref{prop:3} to approximate the fingerprint distribution by a memoryless distribution as a lower bound. 
Define the following stopping times
\begin{align*}
    &\kappa_k\triangleq\min_{\kappa>\frac{C_3 n}{\mu}}\bigg\{\kappa\big|  I_{k}(\mathcal{G}_s,Y^\kappa)>
    \log{\frac{1}{\epsilon}}\bigg\}, k\in [m],
   \qquad \qquad \kappa^*\triangleq\min_{k\in [m]}\kappa_m
\end{align*}
Note that by the definition of the identification function $Id_t(\cdot,\cdot)$ in \eqref{eq:id}, we have $\overline{Q}_{ITS}= \mathbb{E}(\kappa^*)$. 
We show that $\mathbb{E}(\kappa^*)\leq \frac{H(M)+\log{\frac{1}{\epsilon}}+ i_{\max}}{I(E_0;E_s)}$.
Note that $\mathbb{E}(\kappa^*)\leq \mathbb{E}(\kappa_M)$ by definition of $\kappa^*$. So, it is enough to prove the upper bound on $\mathbb{E}(\kappa_M)$. Fix $n'\in \mathbb{N}$. Let $T_{n'}= \min\{\kappa_M,n'\}$. Note that:
 \begin{align}
\nonumber  \mathbb{E}\left(I_{T_{n'}}\left(M\right)\right)&
= \mathbb{E}\left(\sum_{i=1}^{T_{n'}}    \log{\frac{P_{E_0|E_s}(Y_i|F_{k,i})}{P_{E_0}({Y_t})}}+I_o(J)\right)
\stackrel{(a)}{=}\mathbb{E}\left(\sum_{j=1}^{T_{n'}}\log{\frac{P_{E_0|E_s}(Y_i|F_{k,i})}{P_{E_0}({Y_t})}}\right)-H\left(J\right)
\\& \label{eq:th1:1}\stackrel{(b)}{\geq}c'I(E_0;E_s)\mathbb{E}(T_{n'})-H(J),
\end{align}
where (a) uses linearity of expectation, and (b) follows from Wald's identity \cite{wald1944cumulative} and using $P((F_{k,i})_{i\in [t]},Y^t )\leq c' \prod_{i=1}^t P_{E_0,E_s}(F_{k,i},Y_i), t\in \mathbb{N}$ which holds due to Proposition \ref{prop:3}, and the fact that 
\begin{align*}
   & P((F_{k,i})_{i\in [t]},Y^t )=
 \sum_{s^{t}\in \{0,1\}^t} P((R_{k,i})_{i\in {t}}=s^t)\prod_{i\in [t]}P_{E_{s}|E_0}
 (F_{k,i}|s_i)P_{Y|E_0}(Y_i|s_i)
 \\&=  \sum_{s^{t}\in \{0,1\}^t} P((R_{k,i})_{i\in {t}}=s^t)\prod_{i\in [t]}P_{E_{s}|E_0}
 (F_{k,i}|s_i)\mathbbm{1}(Y_i=s_i)
 \leq c' \prod_{i\in [t]}\sum_{s_i\in \{0,1\}} P(R_{k,i}=s_i)P_{E_{s}|E_0}
 (F_{k,i}|s_i)\mathbbm{1}(Y_i=s_i)
\\&= c'  \prod_{i\in [t]}P_{E_0,E_s}(F_{k,i},Y_i). 
\end{align*}
Note that the sparsity condition $\sum_{i\in [t]} s_i= o(n)$ in Proposition \ref{prop:3} is satisfied with probability one due to Proposition \ref{prop:2}.
Also, note that at each step $t\in \mathbb{N}$, the increase in $I_t(M)$ is less than or equal to $i_{max}$. It follows that:
\begin{align}
\label{eq:th1:2}
 \mathbb{E}\left(I_{T_{n'}}\left(M\right)\right)\leq \mathbb{E}\left(I_{T_{n'}-1}\left(M\right)\right)+i_{max}\leq \log\frac{1}{\epsilon}+i_{max},
\end{align}
where we have used the fact that and that by the definition of $\kappa_M$ and $T_{n'}$, we have $I_{T_{n'}-1}(M)\leq \log\frac{1}{\epsilon}$ since $T_{n'}-1<\kappa_M$. Combining \eqref{eq:th1:1} and \eqref{eq:th1:2} we get $\mathbb{E}(T_{n'})\leq \frac{H(M)+\log{\frac{1}{\epsilon}}+ i_{\max}}{c'I(E_0;E_s)}$, and using the monotone convergence theorem by increasing $n'$ asymptotically, we get $\mathbb{E}(T_{n'})=\mathbb{E}(\kappa_M)=\overline{Q}_{ITS}$ which yields the desired bound on $\overline{Q}_{ITS}$. It remains to prove the bound on $P_{e,ITS}$. We have:
\begin{align*}
  &  P_e= P(\exists j\neq M: \kappa_j\leq \kappa_M)\leq \sum_{j\neq M} P(\kappa_j\leq \infty)
  = \sum_{j\neq M} \lim_{\eta\to \infty} P(\kappa_j\leq \eta)
  \\&\stackrel{(a)}{=}\sum_{j\neq M} \lim_{\eta\to \infty} \mathbb{E}_{P_{Y^n,(F_{M,i})_{i\in [n]}}}\left(\frac{P_{Y^n}P_{(F_{M,i})_{i\in [n]}}}{P_{Y^n,(F_{M,i})_{i\in [n]}}}\mathbbm{1}(\kappa_j\leq \eta))\right)
  \\&\leq \sum_{j\neq M} \lim_{\eta\to \infty} 
  \frac{(1+o(1))}{c'}\mathbb{E}_{P_{Y_i,F_{M,i}}}\left(\prod_{i\in [n]}\frac{P_{Y_i}P_{F_{M,i}}}{P_{Y_i,F_{M,i}}}\mathbbm{1}(\kappa_j\leq \eta))\right)
  \\&= \sum_{j\neq M} \lim_{\eta\to \infty} 
  \frac{(1+o(1))}{c'}\mathbb{E}_{P_{Y_i,F_{M,i}}}\left(\prod_{i\in [\eta]}\frac{P_{Y_i}P_{F_{M,i}}}{P_{Y_i,F_{M,i}}}\mathbbm{1}(\kappa_j\leq \eta))\right)\times
  {E}_{P_{Y_i,F_{M,i}}}\left(\prod_{i\in [\eta+1,n]}\frac{P_{Y_i}P_{F_{M,i}}}{P_{Y_i,F_{M,i}}}\right)
\\
  & \sum_{j\neq M} \lim_{\eta\to \infty} 
  \frac{(1+o(1))}{c'}\mathbb{E}_{P_{Y_i,F_{M,i}}}\left(\prod_{i\in [\eta]}\frac{P_{Y_i}P_{F_{M,i}}}{P_{Y_i,F_{M,i}}}\mathbbm{1}(\kappa_j\leq \eta))\right)
  \\&= \sum_{j\neq M} \lim_{\eta\to \infty} 
  \frac{(1+o(1))}{c'}\mathbb{E}_{P_{Y_i,F_{M,i}}}\left(e^{\sum_{i\in [\eta]}\log\frac{P_{Y_i}P_{F_{M,i}}}{P_{Y_i,F_{M,i}}}}\mathbbm{1}(\kappa_j\leq \eta))\right)
  \end{align*}
  \begin{align*}
  &{\leq} 
  \sum_{j\neq M} \lim_{\eta\to \infty} 
  \frac{(1+o(1))}{c'}\mathbb{E}_{P_{Y_i,F_{M,i}}}\left(e^{\sum_{i\in [\eta]}\log\frac{P_{Y_i}P_{F_{M,i}}}{P_{Y_i,F_{M,i}}}})\right)
= 
  \sum_{j\neq M} \lim_{\eta\to \infty} 
  \frac{(1+o(1))}{c'}\mathbb{E}_{P_{Y_i,F_{M,i}}}\left(e^{-I_{\eta}(M)- I_0(M)})\right)
  \\&\leq 
  \sum_{j\neq M} \lim_{\eta\to \infty} 
  \frac{(1+o(1))}{c'}\mathbb{E}_{P_{Y_i,F_{M,i}}}\left(e^{-\log\frac{1}{\epsilon}-I_0(M)})\right)
  =  \sum_{j\neq M} \frac{1}{c'}\epsilon P_M(j)\leq \frac{1}{c'}\epsilon(1+o(1)).
\end{align*}
where in (a) we have used the fact that $P_{(F_{j,i})_{i\in [n]}}= P_{(F_{M,i})_{i\in [n]}}, j\in [m]$. 

\end{appendices}

\bibliographystyle{unsrt}

\end{document}